\documentclass[10pt, conference, letterpaper]{IEEEtran}
\usepackage{etex}
\pdfoutput=1
\IEEEoverridecommandlockouts


\usepackage{graphicx}
\usepackage{amsmath}
\usepackage{amssymb}
\usepackage{amsthm}
\usepackage[noend]{algpseudocode}
\usepackage{algorithm}
\usepackage{color}
\usepackage{subcaption}
\usepackage{multirow}
\usepackage{framed}
\renewcommand{\P}{\mathbb{P}}
\newcommand{\N}{\mathbb{N}}

\usepackage{gensymb}
\usepackage{latexsym,epsfig,color,verbatim}
\usepackage{pifont}
\usepackage{slashbox}
\usepackage{bbm}
\newtheorem{define}{Definition}
\newtheorem{remark}{\bf Remark}



\newtheorem{conjecture}{Conjecture}
\newtheorem{lemma}{Lemma}
\newtheorem{theorem}{Theorem}
\newtheorem{corollary}{Corollary}
\newtheorem*{theorem*}{\bf Theorem}

\begin{document}
\title{Resource Allocation in One-dimensional Distributed Service Networks}

\author{Nitish K. Panigrahy$^\dagger$, Prithwish Basu$^\P$, Philippe Nain $^\N$, Don Towsley$^\dagger$, Ananthram Swami$^\ddagger$, \\Kevin S. Chan$^\ddagger$ and Kin K. Leung$^\mathsection$\\ {\normalsize $^\dagger$ University of Massachusetts Amherst, MA, USA. Email: \{nitish, towsley\}@cs.umass.edu}
\\{\normalsize$^\P$Raytheon BBN Technologies, Cambridge, MA 02138, USA. Email:prithwish.basu@raytheon.com}
\\{\normalsize$^\N$ Inria, 06902 Sophia Antipolis Cedex, France. Email: philippe.nain@inria.fr}
\\{\normalsize$^\ddagger$Army Research Laboratory, Adelphi, MD 20783, USA. Email:\{ananthram.swami, kevin.s.chan\}.civ@mail.mil}
\\{\normalsize$^\mathsection$Imperial College London, London SW72AZ, UK. Email: kin.leung@imperial.ac.uk}}

\maketitle
\begin{abstract}
We consider assignment policies that allocate resources to users, where both resources and users are located on a one-dimensional line $[0,\infty)$. First, we consider unidirectional assignment policies that allocate resources only to users located to their left.
We propose the Move to Right ({\it MTR}) policy, which scans from left to right assigning nearest rightmost available resource to a user, and contrast it to the Unidirectional Gale-Shapley ({\it UGS}) matching policy. While both policies 
among all unidirectional policies, minimize the expected distance traveled by a request ({\it request distance}), {\it MTR} is fairer. 
Moreover, we show that when user and resource locations are modeled by statistical point processes, and resources are allowed to satisfy more than one user, the spatial system under unidirectional policies can be mapped into bulk service queueing systems,  thus allowing the application of many queueing theory results that yield closed form expressions. As we consider a case where different resources can satisfy different numbers of users, we also generate new results for bulk service queues.  We also consider bidirectional policies where there are no directional restrictions on resource allocation and develop an algorithm for computing the optimal assignment which is more efficient than known algorithms in the literature when there are more resources than users. Finally, numerical evaluation of performance of unidirectional and bidirectional allocation schemes yields design guidelines beneficial for resource placement.
\end{abstract}

\section{Introduction}\label{sec:intro}
The past few years have witnessed significant growth in the use of distributed network analytics involving agile code, data and computational  resources. In many such networked systems, for example, Internet  of  Things \cite{atzori10}, a large number of computational and storage resources are  widely distributed in the physical world. These resources are accessed by various end users/applications that are also distributed over the physical space. Assigning users or applications to resources efficiently is key to the sustained high-performance operation of the system. 

In some systems, requests are transferred over a network to a server that provides a needed resource. In other systems, servers are mobile and physically move to the user making a request.  Examples of the former type of service include accessing storage resources over a wireless network to store files and requesting computational resources to run image processing tasks; whereas an example of the latter type of service is the arrival of ride-sharing vehicles to the user's location over a road transportation network.

Not surprisingly, the spatial distribution of resources and users\footnote{We use the terms ``users'' and ``requesters'' interchangeably and same holds true for the terms ``resources'' and ``servers''.} in the network is an important factor in determining the overall performance of the service. A key measure of performance is \emph{average request distance}, that is average distance between a user and its allocated resource/server (where distance is measured on the network). This directly translates to latency incurred by a user when accessing the service, which is arguably among the most important criteria in distributed service applications. For example, in wireless networks,  signal attenuation is strongly coupled to request distance, therefore developing allocation policies to minimize request distance can help reduce energy consumption, an important concern in battery-operated wireless networks. Another important practical constraint in distributed service networks is \emph{service capacity}. For example, in network analytics applications, a networked storage device can only support a finite number of concurrent users; similarly, a computational resource can only support a finite number of concurrent processing tasks. Likewise, in physical service applications like ride-sharing, a vehicle can pick up a finite number of passengers at once.

Therefore, a primary problem in such distributed service networks is to efficiently assign each user to a suitable resource so as to minimize average request distance and ensure no resource serves more users than its capacity. If the entire system is being managed by a single administrative entity such as a ride sharing service, or a datacenter network where analytics tasks are being assigned to available CPUs, there are economic benefits in minimizing the average request distance across all (user, resource) pairs, which is tantamount to minimizing the average delay in the system.

The general version of this capacitated assignment problem can be solved by modeling it as a \emph{minimum cost flow} problem on graphs~\cite{Ahuja93} and running the \emph{network simplex algorithm}~\cite{Orlin97}. However, if the network has a low-dimensional structure and some assumptions about the spatial distributions of users and resources hold, more efficient methods can be developed.


In this paper, we consider two one-dimensional network scenarios that motivate the study of this special case of the user-to-resource assignment problem. 

The first scenario is  ride-hailing on a one-way street where vehicles move right to left. If the vehicles of a ride-sharing company are distributed along the street at a certain time, and users equipped with smartphone ride-hailing apps request service, the system attempts to assign vehicles with spare capacity located towards the right of the users so as to minimize average ``pick up" distance. Abadi et al.~\cite{Abadi17} introduced this problem and presented a policy known as Unidirectional Gale-Shapley\footnote{We rename \emph{queue matching} defined in \cite{Abadi17}  as Unidirectional Gale-Shapley Matching to avoid overloading the term \emph{queue}.} matching ({\it UGS})  minimize average pick up distance. In this policy, all users concurrently emit rays of light toward their right and each user is matched with the vehicle that first receives the emitted ray. While the well-known Gale-Shapley matching algorithm~\cite{Gale62} matches user-resource pairs that are mutually nearest to each other, its unidirectional variant, UGS, matches a user to the nearest resource on its right. Note that, this one-dimensional network setting also applies to vehicular wireless ad-hoc networks on a one-lane roadway~\cite{Ho11,Leung94}\footnote{Furthermore, \cite{Ho11} confirms that vehicle location distribution on the streets in Central London can be closely approximated by a Poisson distribution.}, where users are in vehicles and servers are attached to fixed infrastructure such as lamp posts. Users attempt to allocate their computation tasks over the wireless network to servers located to their right so that they can retrieve the results with little effort while driving by.

In this paper, we propose another policy ``Move to Right'' policy (or {\it MTR}) which has the same ``expected distance traveled by a request'' ({\it request distance}) as UGS but has a lower variance. {\it MTR} sequentially allocates users to the geographically nearest available vehicle located to his/her right. When user and resource locations are modeled by statistical point processes the one-dimensional unidirectional space behaves similar to time and notions from queueing theory can be applied. In particular, when user and vehicle  locations are modeled by independent Poisson processes, average request distance can be characterized in closed form by considering inter-user and inter-server distances as parameters of a {\it bulk service} M/M/1 queue where the bulk service capacity denotes the maximum number of users that can be handled by a server. We equate request distance in the spatial system to the expected \emph{sojourn time} in the corresponding queuing model\footnote{Sojourn time is the sum of waiting and service times in a queue.}. This natural mapping allows us to use well-known results from queueing theory and in some cases to propose new queueing theoretic models to characterize request distances for a number of interesting situations beyond M/M/1 queues.


The second scenario involves a convoy of vehicles traveling on a one-dimensional space, for example, trucks on a highway or boats on a river. Some vehicles have expensive camera sensors (image/video) but have inadequate computational storage or processing power. On the other hand, cheap storage and processing is easily available on several other vehicles. The cameras periodically take photos/videos as they move through space and want them processed / stored. In such case, bidirectional assignment schemes are more suitable. Since no directionality restrictions are imposed on the allocation algorithms, computing the optimal assignment is not as simple as in the unidirectional case.

We explore the special structure of the one-dimensional topology to develop an optimal algorithm that assigns a set of requesters $R$ to a set of resources $S$ such that the total assignment cost is minimized. This problem has been recently solved for $|R|=|S|$~\cite{Bukac18}. However, we are interested in the case when $|R| < |S|$. We propose a dynamic Programming based algorithm which solves this case with time complexity $O(|R|(|S|-|R|+1))$. Note that other assignment algorithms in literature such as the Hungarian primal-dual algorithm and Agarwal's variant~\cite{Agarwal95} have time complexities $O(|R|^3)$ and $O(|R|^{2+\epsilon})$ respectively and assume $|R|=|S|$ for general and Euclidean distance measures.

Our contributions are summarized below:
\begin{enumerate}
\item Analysis of simple unidirectional allocation policies {\it MTR} and {\it UGS} yielding closed form expressions for mean request distance. 
\begin{itemize}
\item When inter-requester and inter-resource distances are exponentially distributed, we model unidirectional policies as a bulk service M/M/1 queue.
\item When inter-requester distances are  generally distributed but the inter-resource distances are exponentially distributed, we model the situation using an accessible batch service G/M/1 queue.
\item When inter-requester distances are exponentially distributed but inter-resource distances are generally distributed, we model the spatial system as an accessible batch service M/G/1 queue with the first batch having exceptional service time. To the best of our knowledge this system has not been studied previously in the queueing theory literature.
\item We include several generalizations of our framework. In the first place we discuss a simulation driven conjecture for evaluating request distance for general distance distributions under heavy traffic. We also investigate the heterogeneous server capacity scenario where server capacity is a random variable and to the best of our knowledge this system has not been studied previously in the queueing theory literature. We derive expressions for expected request distance when servers have infinite capacity. 
\end{itemize}
\item A novel algorithm for optimal (bidirectional) assignment with time complexity $O(|R|(|S|-|R|+1))$. 
\item A numerical and simulation study of different assignment policies: UGS , MTR, a bi-directional heuristic allocation policy (Gale-Shapley) and the optimal policy.
\end{enumerate}

The paper is organized as follows. The next section discusses related work. Section \ref{sec:model} contains technical preliminaries. We show the equivalence of UGS and MTR w.r.t expected request distance in Section~\ref{sec:queue}, and present results associated with the case when servers are Poisson distributed in Section ~\ref{sec:mm1}. In Section ~\ref{sec:mg1}, we develop formulations for expected request distance when either user or server placements are described by Poisson processes. We include some generalizations of our framework such as analysis under general distance distributions, results for heterogeneous server capacity and uncapacitated allocation in Section ~\ref{sec:gen}. The optimal bidirectional allocation strategy is presented in Section ~\ref{sec:opt}. We compare the  performance of various local allocation strategies in Section~\ref{sec:perfcomp}. We conclude the paper in Section~\ref{sec:con}.

\section{Related Work}\label{sec:reltwrk}
\noindent{\textit{Poisson Matching:}} Holroyd et al. \cite{Holroyd09} first studied translation invariant matchings between two $d$-dimensional Poisson processes with equal densities. Their primary focus was obtaining upper and lower bounds on expected matching distance for stable matchings. Abadi et al. \cite{Abadi17} introduced ``Unidirectional Gale-Shapley'' matching ({\it UGS}) and derived bounds on the expected matching distance for stable matchings between two one-dimensional Poisson processes with different densities. In this paper, we propose another unidirectional allocation policy: ``Move To Right'' policy ({\it MTR}) and provide explicit expressions for the expected matching distance for both MTR and UGS when either requesters or servers are distributed according to a renewal process and the according to a Poisson process.

\noindent{\textit{Exceptional Queueing Systems and Accessible Batches:}} Welch et al. \cite{Welch64} first studied an M/G/1 queue where a customer arriving when the server is idle has a different service time than the others. Bulk service M/G/1 queues has been studied in \cite{bailey54}. Authors in \cite{Goswami11} analyzed a bulk service G/M/1 queue with accessible or non-accessible batches where an accessible batch is considered to be a batch in service allowing subsequent arrivals, while the service is on. In this work, we model the spatial system using an accessible batch service queue with the first batch having exceptional service time. To the best of our knowledge this system has not been studied previously in queueing theory literature.

\noindent{\textit{Euclidean Bipartite Matching:}}
The optimal user-server assignment problem can be modeled as a minimum-weight matching on a weighted bipartite graph where weights on edges are given by the Euclidean distances between the corresponding vertices \cite{Mezard88}. Well-known polynomial time solutions exist for this problem, such as the modified Hungarian algorithm proposed by Agarwal et al. \cite{Agarwal95} with a running time of $O(|R|^{2+\epsilon})$, where $|R|$ is the total number of users. In the case of an equal number of users and servers, the optimal user-server assignment on a real line is known \cite{Bukac18}. In this paper, we consider the case when there are fewer users than servers.

\section{Technical Preliminaries}\label{sec:model}
Consider a set of users $R$ and a set of servers $S$. Each user makes a request that can be satisfied by any server. Assume that each server $j\in S$ has capacity $c_j \in \mathbb{Z}^+$ corresponding to the maximum number of requests that it can process. Suppose users and servers are located on a line $\mathcal{L}$. Formally, let $r : R \to \mathcal{L}$ and $s : S \to \mathcal{L}$ be the location functions for users and servers, respectively, such that a distance $d_{\mathcal{L}}(r,s)$ is well defined for all pairs $(r,s)\in R\times S$. Initially we assume that all servers have equal capacities i.e. $c_j = c\;\forall j \in S.$ Later in Section \ref{sec:het} we extend our analysis to a case in which server capacities are integer random variables.

\subsection{User and server spatial distributions}
Let $0\le r_1 \le r_2 \le \cdots$ represent user locations and $0\le s_1 \le s_2 \le \cdots$ be the server locations. Let  $X_j = s_j - s_{j-1}, j\ge 1, s_0 = 0,$ denote the inter-server distances and  $Y_i = r_i - r_{i-1}, i\ge 1, r_0 = 0,$ the inter-user distances. We assume $\{X_j\}_{j\ge1}$ to be a renewal process with cumulative distribution function (cdf) 
\begin{align}
\mathbb{P}(X_{j} \leq x) = F_X(x) .
\end{align}
We also assume $\{Y_i\}_{i\ge1}$  to be a renewal process with cdf $F_Y(x)$, i.e.,
\begin{align}
\mathbb{P}(Y_{i} \leq x) = F_Y(x).
\end{align}
We denote $\alpha_X = 1/\mu$ and $\sigma_X^2$ to be the mean and variance associated with $F_X$. Similarly let  $\alpha_Y = 1/\lambda$ and $\sigma_Y^2$ be the mean and variance associated with $F_Y$. We let $\rho = \lambda/\mu$ and assume that $\rho < c$. Denote by $F_X^{*}(s) = \int_0^\infty e^{-sx}dF_X(x)$ and $F_Y^{*}(s)$ the Laplace-Stieltjes transform ({\it LST}) of $F_X$ and $F_Y$ with $s\ge 0.$

In our paper, we consider various inter-server and inter-user distance distributions, including exponential, deterministic, uniform and hyperexponential.

\subsection{Allocation policies}
One of our goals is to analyze the performance of various request allocation policies using expected request distance as a performance metric. We define various allocation policies as follows.
\begin{itemize}
\item{\bf Unidirectional Gale-Shapley ({\it UGS}):}
In UGS, each user simultaneously emits a ray to their right. Once the ray hits an unallocated server $s$, the user is allocated to $s$.
\item\noindent{\bf Move To Right ({\it MTR}):}
In MTR, starting from the left, each user is allocated sequentially to the nearest available server to its right.
\item\noindent{\bf Gale-Shapley ({\it GS}) \cite{Gale62}:}
In this matching, each user selects the nearest server and each server selects its nearest user. Remove reciprocating pairs, and continue.
\item\noindent{\bf Optimal Matching:}
This matching minimizes average request distance among all feasible allocation policies.
\end{itemize}

\section{Unidirectional Allocation Policies}\label{sec:queue}
\begin{figure}[htbp]
\centering
\includegraphics[width=0.6\linewidth]{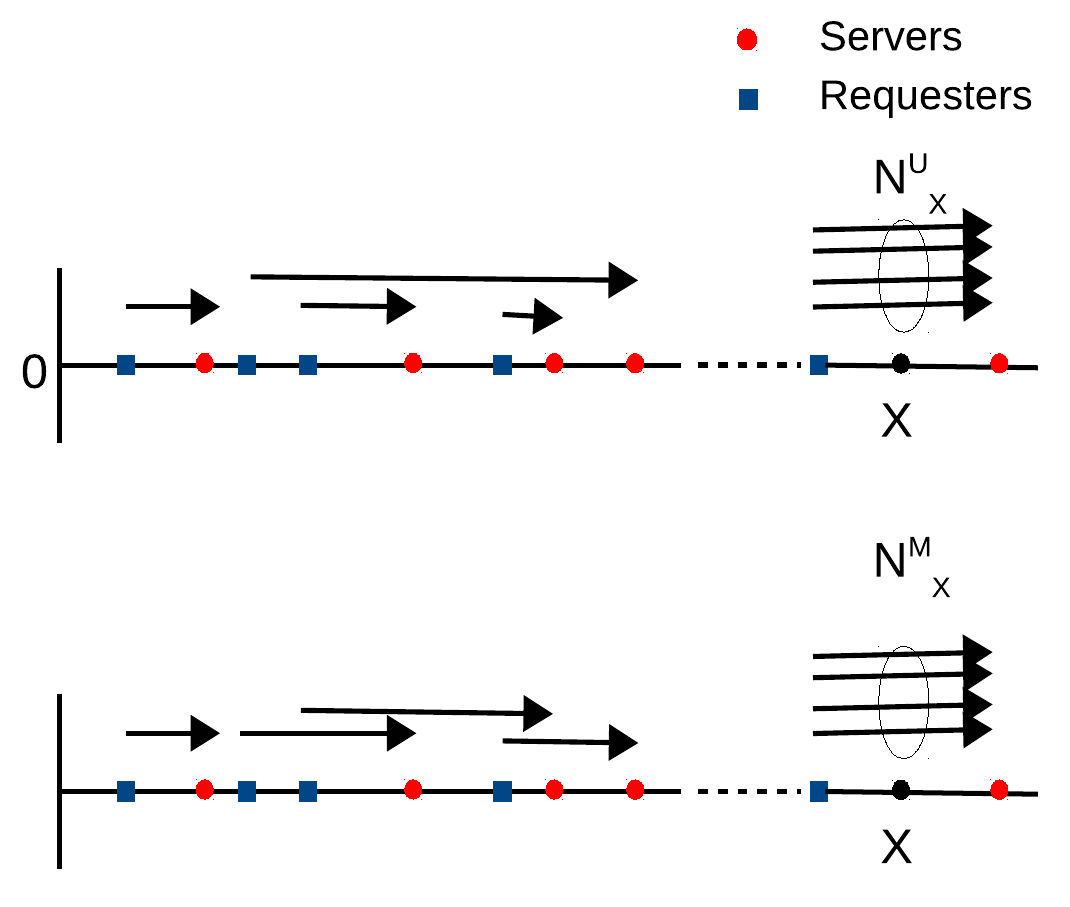}
\vspace{-0.1in}
\caption{Allocation of users to servers on the one-dimensional network. Top: UGS, Bottom: MTR allocation policy.}
\label{1d-grid-poisson}
\vspace{-0.1in}
\end{figure} 

In this Section, we establish the equivalence of UGS and MTR w.r.t number of requests that traverse a point and  expected request distance. Define $N_{x}^{P}$ and  $D_i^P$ to be random variables for the number of requests that traverse point $x \in {\mathcal L}$ and distance between user $i$ and its allocated server under policy $P$, respectively. Thus $N_{x}^{U}$ and  $N_{x}^{M}$ denote the number of requests that traverse point $x \in {\mathcal L}$ under UGS and MTR, respectively, as shown in Figure \ref{1d-grid-poisson}. Consider the following definition of busy cycle in a service network.

\begin{define}
A busy cycle for a policy P is an interval $I = [a,b] \subset {\mathcal L}$ such that $\exists\; i,j $ with $r_i = a, s_j = b$ for which $N_{x}^{P} > 0, \forall x \in I$ and $N_{x}^{P} = 0$ for $x = a-\epsilon$ and $x = b+\epsilon$  with $\epsilon$ being an infinitesimal positive value.
\end{define}
We have the following theorem.
\begin{theorem}\label{lm:queue_mtr}
$N_{x}^{U} = N_{x}^{M}, x\ge 0.$
\end{theorem}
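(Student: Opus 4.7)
The plan is to reinterpret the one-dimensional spatial configuration as a batch-service queueing system by identifying the position coordinate $x$ with a time axis. A user at $r_i$ corresponds to a customer arriving at time $r_i$, and a server at $s_j$ with capacity $c_j$ corresponds to a service event at time $s_j$ that instantaneously removes up to $c_j$ waiting customers. Under this mapping, $N_x^P$ is exactly the number of customers present in the system at time $x$ under discipline $P$, and the busy cycles defined above coincide with the classical busy periods of the queue.

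Within this framework I would identify MTR with first-come-first-served (FCFS) and UGS with last-come-first-served (LCFS). MTR processes users left to right and assigns each to the nearest available server on its right, so the leftmost waiting user is always the first to grab the next server slot, which is the defining property of FCFS. For UGS, the ray from $r_i$ reaches server $s_j$ at ``time'' $s_j - r_i$, so among unmatched users with $r_i \le s_j$ the one with the largest $r_i$ arrives first; sweeping through the servers in spatial order, this matches the $c_j$ most recent unmatched arrivals at each service event, which is exactly non-preemptive LCFS. The delicate point here is that the UGS definition is global (``first ray absorbed'') rather than sequential, so I would prove the equivalence to the server-by-server LCFS rule by induction on $j$: at $s_j$, any user with $r_i \le s_j$ whose ray was not absorbed at some $s_k < s_j$ competes on a first-arrival basis, and ray arrival time at $s_j$ is strictly decreasing in $r_i$.

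The remaining step is a conservation argument. Define $W(\cdot)$ by $W(0^-)=0$, $W(r_i) = W(r_i^-) + 1$ at every arrival, and $W(s_j) = W(s_j^-) - \min(c_j, W(s_j^-))$ at every service event. Since both FCFS and LCFS are non-preemptive and work-conserving, each realization serves exactly $\min(c_j, W(s_j^-))$ users at $s_j$, independent of the identity of the users chosen. Hence $W(x)$ is the same sample path under both disciplines. A user in $[0,x]$ contributes a request crossing $x$ iff it is not matched to a server in $[0,x]$, so
\begin{equation*}
N_x^P \;=\; u(x) \;-\; \sum_{j:\,s_j \le x} \min\bigl(c_j,\, W(s_j^-)\bigr) \;=\; W(x)
\end{equation*}
for $P \in \{UGS, MTR\}$, which yields $N_x^U = N_x^M$ pathwise for every $x \ge 0$. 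I expect the main obstacle to be the careful bookkeeping of the UGS-to-LCFS reduction, since the simultaneous-ray formulation allows events at different servers to interleave in time; once that reformulation is in hand, the rest is the standard observation that the number in system is invariant across non-preemptive work-conserving disciplines, and the identity holds without any distributional assumption on $F_X$, $F_Y$, or on the capacities.
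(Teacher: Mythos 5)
Your proof is correct, and it rests on the same underlying idea as the paper's: the number-in-system process is a conservation quantity that does not depend on \emph{which} waiting users a server picks, only on \emph{how many}, and both UGS and MTR are work-conserving in the sense that a server with spare capacity is never bypassed by an unmatched user to its left. The execution differs in two useful ways. First, the paper decomposes the line into busy cycles (asserting, without much argument, that UGS and MTR share the same busy cycles) and then counts $N_x$ as arrivals minus server positions within a cycle via $N_x = L_{x,R} - L_{x,S}$; as written that counting formula is only literally correct for $c=1$ (an interior server of a busy cycle removes $c$ users, not one), even though the subsequent remark claims the theorem for $c>1$. Your single global recursion $W(s_j) = W(s_j^-) - \min\bigl(c_j, W(s_j^-)\bigr)$ avoids the busy-cycle decomposition entirely, handles general and even heterogeneous capacities cleanly, and correctly accounts for the partial batch at the end of a busy period. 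Second, you explicitly flag and discharge the step the paper glosses over: reducing UGS's simultaneous-ray definition to a server-by-server rule (at $s_j$, the candidates are exactly the $W(s_j^-)$ users to its left not absorbed earlier, all of whose rays reach $s_j$, so $s_j$ absorbs $\min(c_j, W(s_j^-))$ of them); your induction on $j$ closes this. The FCFS/LCFS identification is not needed for the theorem itself (only the served counts matter, not the identities), but it is consistent with how the paper later uses the queueing analogy. Net: same conservation principle, but your version is the more rigorous and more general writeup.
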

\begin{proof}
Due to the unidirectional nature of matching, both UGS and MTR have the same set of busy cycles. Denote ${\mathcal I}$ as the set of all busy cycles in the service network. In the case when $x \in {\mathcal L}\setminus \bigcup\limits_{I\in{\mathcal I}} I$ we already have $N_{x}^{U} = N_{x}^{M} = 0.$ Let us now consider a busy cycle $I^U = [a^U,b^U]$ under UGS policy. Let $x\in I^U.$ Let $L^U_{x,R} = |\{r_i|a^U\le r_i\le x\}|$ and $L^U_{x,S} = |\{s_j|a^U\le s_j\le x\}|.$ $N_{x}^{U} = L_{x,R}^{U} - L_{x,S}^{U}.$ Similarly define $L_{x,R}^{M}$ and $L_{x,S}^{M}$ for MTR policy. Clearly $N_{x}^{M} = L_{x,R}^{M} - L_{x,S}^{M}.$ As both policies have the same set of busy cycles we have $L_{x,R}^{U} = L_{x,R}^{M}$ and $L_{x,S}^{U} = L_{x,S}^{M}.$ Thus we get
\begin{align}
N_{x}^{U} = N_{x}^{M},\; x \in \mathbb{R}^{+},
\end{align}
\end{proof}
\begin{corollary}
$\mathbb{E}[D^U] = \mathbb{E}[D^M]$ i.e. the expected request distances are the same for both UGS and MTR under steady state.
\end{corollary}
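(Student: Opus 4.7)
The plan is to deduce the corollary from Theorem~1 by writing the total request distance on any interval as an integral of the crossing count $N_x^P$, and then averaging per request.

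First I would establish the pathwise identity $\sum_{i} D_i^P = \int_0^\infty N_x^P\, dx$ valid for any unidirectional policy $P$. The reasoning is: a request that travels from $r_i$ to its allocated server at $s_{\sigma(i)} \ge r_i$ satisfies $D_i^P = s_{\sigma(i)} - r_i = \int_0^\infty \mathbf{1}\{r_i \le x < s_{\sigma(i)}\}\, dx$; summing these indicators over $i$ counts, for each $x$, exactly the number of assignment paths crossing $x$, which is $N_x^P$ by definition.

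Next, I would apply this identity on a single busy cycle $I \in {\mathcal I}$. Since UGS and MTR share the same set of busy cycles, and within each cycle both policies match exactly the set of users whose positions lie in $I$, Theorem~1 yields
\begin{equation*}
\sum_{i : r_i \in I} D_i^U \;=\; \int_I N_x^U\, dx \;=\; \int_I N_x^M\, dx \;=\; \sum_{i : r_i \in I} D_i^M.
\end{equation*}
Thus the sum of request distances within every busy cycle agrees pathwise for the two policies, and so does the number of requests in the cycle.

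Finally, I would pass to the stationary expectation. Under the renewal assumption on $\{X_j\}$ and $\{Y_i\}$ with $\rho < c$, the Palm-type expectation $\mathbb{E}[D^P]$ can be realized as the long-run per-request average of $D_i^P$; by the ergodic theorem this equals the ratio of the expected total distance within a typical busy cycle to the expected number of requests per busy cycle. Both quantities are policy-independent by the previous step, so $\mathbb{E}[D^U] = \mathbb{E}[D^M]$. The only subtlety is interpreting $\mathbb{E}[D^P]$ as a per-request (Palm) expectation and confirming that boundary effects from any partially-observed busy cycle vanish in the long-run average; this is immediate once the mean busy-cycle length is finite, which the stability condition $\rho < c$ guarantees.
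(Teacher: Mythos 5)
Your proof is correct and takes essentially the same route as the paper: the paper deduces the corollary from Theorem~\ref{lm:queue_mtr} by invoking Little's law, and your pathwise identity $\sum_i D_i^P=\int_0^\infty N_x^P\,dx$ combined with the busy-cycle/renewal-reward step is precisely the conservation argument ($L=\lambda W$) underlying that citation, spelled out from first principles. The only difference is one of explicitness, not of method.
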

\begin{proof}
Under steady state both $N_{x}^{U}$ and $N_{x}^{M}$  converge to a random variable. Applying Little's law we have  $\mathbb{E}[D^U] = \mathbb{E}[D^M].$
\end{proof}
\begin{remark}
Note that Theorem \ref{lm:queue_mtr} applies to any inter-server or inter-user distance distribution. It also applies to the case where servers have capacity $c>1.$
\end{remark}
\begin{remark}
Although MTR and UGS are equivalent w.r.t. the expected request distance, MTR tends to be fairer, i.e., has low variance\footnote{It is well known in queueing theory that among all service disciplines  the variance of the waiting time is minimized under FCFS policy \cite{Kingman62}. In Section \ref{sec:mm1} we show that MTR maps to a temporal FCFS queue.} for expected request distance.
\end{remark}

\section {Unidirectional Poisson Matching}\label{sec:mm1}
\begin{table*}[htbp]
\center
\begin{tabular}{|c| c| c| c|}
\hline
 {\bf Distribution}&{\bf Parameters}&$\pmb{F_X(x)}$&$\pmb{\mathcal{B}(x)}$\\
\hline
Exponential&$\mu$: rate&$1-e^{-\mu x}$&$\frac{1}{\lambda}\left[1-e^{-\lambda x}\right]-\frac{1}{\lambda+\mu}\left[1-e^{-(\lambda+\mu) x}\right]$\\
&&&\\
\hline
Uniform&$b:$ maximum value&$x/b,\quad 0 \leq x \leq b$&$\frac{1}{\lambda^2b}\left[1-e^{-\lambda b}\right]-\frac{e^{-\lambda x}}{\lambda}$\\
\hline
Deterministic&$d_0:$ constant&$1, \quad x \ge  d_0$&$\frac{e^{-\lambda d_0}-e^{-\lambda x}}{\lambda}$\\
\hline
Hyper&$l$: order&$1-\sum\limits_{j=1}^{l}p_{j}e^{-\mu_{j}x}$&$\frac{1}{\lambda}\left[1-e^{-\lambda x}\right]-\sum\limits_{j=1}^{l}\frac{p_j}{\lambda+\mu_{j}}\left[1-e^{-(\lambda+\mu_{j}) x}\right]$\\
-exponential&$p_{j}:$ phase probability &&\\
&$\mu_{j}:$ phase rate&&\\
\hline
\end{tabular}
\caption{Properties of specific inter-server distance distributions.}
\label{tbltraff}
\end{table*}

In this section, we characterize request distance statistics under unidirectional policies when both users and servers are distributed according to two independent Poisson processes. We first analyze MTR as follows.
%
%
\subsection{MTR}

Under this allocation policy, the service network can be modeled as a bulk service M/M/1 queue. A bulk service M/M/1 queue provides service to a  group of $c$ or fewer customers. The server serves a bulk of at most $c$ customers whenever it becomes free. Also customers can join an existing service if there is room which is an example of accessible batch. In Section \ref{sec:mg1} we describe the notion of accessible batches in greater detail. The service time for the group is exponentially distributed and customer arrivals are described by a Poisson process. The distance between two consecutive users in the service network can be thought of as inter-arrival time between customers in the bulk service M/M/1 queue. The distance between two consecutive servers maps to a bulk service time.

Having established an analogy between the service network and the bulk service M/M/1 queue, we now define the state space for the service network. Consider the definition of $N_x$ as the number of requests\footnote{We drop the superscript $(M)$ for brevity.} that traverse point $X \in L$ under MTR. In steady state, $N_x$ converges to a random variable $N$ provided $\lambda < c\mu$. Let $\pi_k$ denote $\texttt{Pr}[N = k]$ with $k\ge 0$.

%
Following the procedure in \cite{kleinrock76}, we obtain the steady state probability vector $\pi = [\pi_i, i\ge0].$ 
In the service network, request distance corresponds to the sojourn time in the bulk service M/M/1 queue. By applying Little's formula, we obtain the following expression for the expected request distance
\begin{align}
\mathbb{E}[D] = \frac{r_0}{\lambda(1-r_0)}, \label{eq:c2}
\end{align}
\noindent where $r_0$ is the only root in the interval $(0,1)$ of the following equation (with $r$ as the variable)
\begin{align}\label{eq:polymm1}
\mu r^{c+1} - (\lambda+\mu)r + \lambda = 0.
\end{align}

\subsubsection{When server capacity: $c =1$}
When $c=1,$ $r_0 = \rho$ is a solution of \eqref{eq:polymm1}. Thus we can evaluate the expected request distance as 
\begin{align}
\mathbb{E}[D] = \frac{\rho}{\lambda(1-\rho)} = \frac{1}{\mu-\lambda}. \label{eq:c12}
\end{align}
Note that, when server capacity is one, the service network can be modeled as an M/M/1 queue. In such a case,   \eqref{eq:c12} is the mean sojourn time for an M/M/1 queue.


\subsection{UGS}
When both users and servers are Poisson distributed and servers have unit capacity, the request distance in UGS has the same distribution as the busy cycle in the corresponding Last-Come-First-Served Preemptive-Resume ({\it LCFS-PR}) queue having the density function \cite{Abadi17}
\begin{align}
f_{D^U}(x) = \frac{1}{x\sqrt{\rho}}e^{(\lambda+\mu)x}I_1(2x\sqrt{\lambda\mu}),\; x > 0,
\end{align}
\noindent where $\rho = \lambda/\mu$ and $I_1$ is the modified Bessel function of the first kind. Thus the expected request distance is equivalent to the average busy cycle duration in a LCFS-PR queue given by $1/(\mu-\lambda)$ \cite{Abadi17}.

When servers  have capacities $c>1$ it is difficult to characterize the expected request distance explicitly. However, by Theorem \ref{lm:queue_mtr}, the expected request distance under UGS is the same as that of MTR given by \eqref{eq:c2}.

\section{Unidirectional General Matching}\label{sec:mg1}
\begin{figure}[htbp]
\centering
\includegraphics[width=0.6\linewidth]{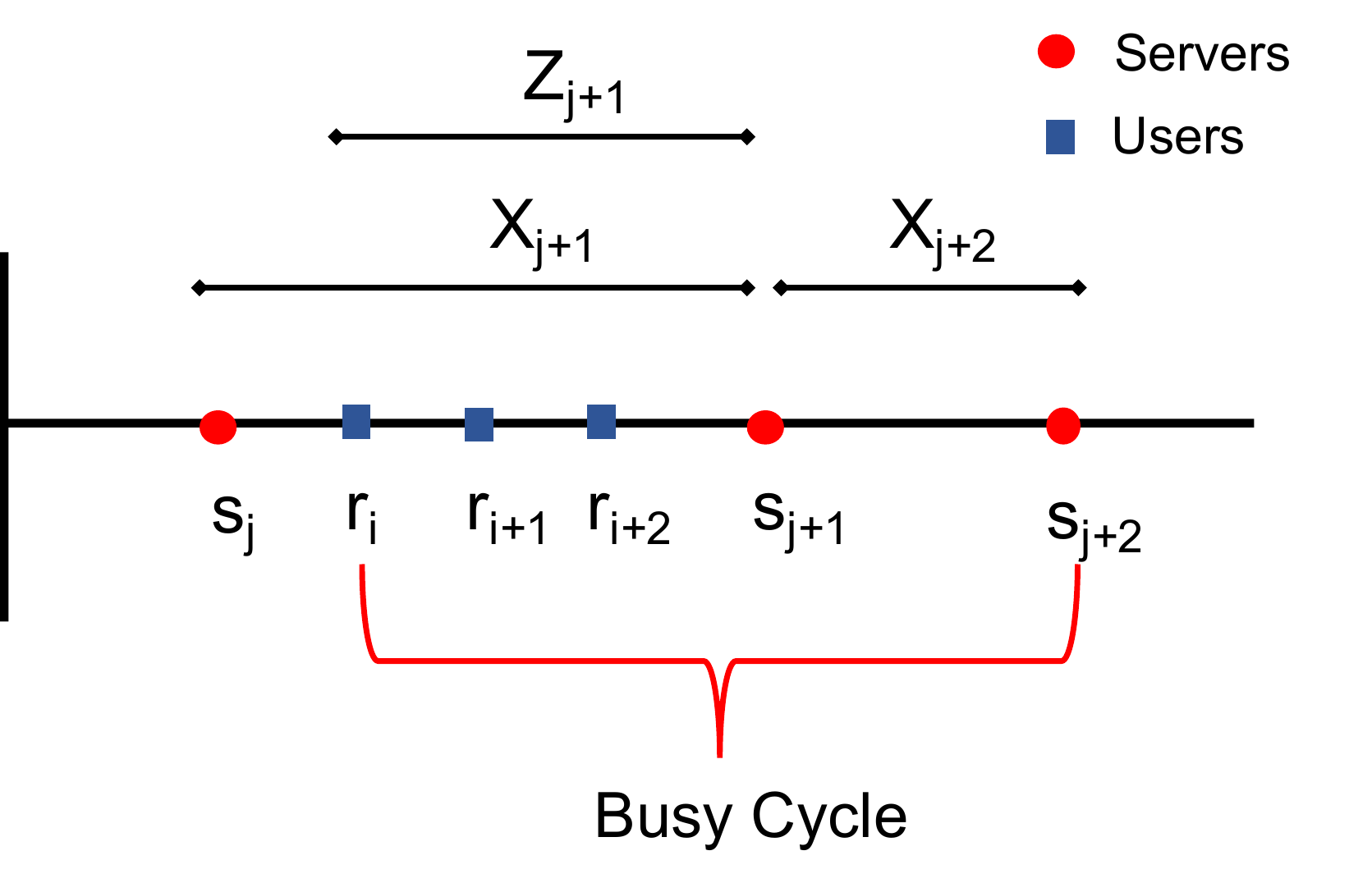}
\vspace{-0.1in}
\caption{Allocation of users to servers under MTR policy.}
\label{1d-grid-nonpoisson}
\vspace{-0.1in}
\end{figure} 
We now derive expressions for the expected request distance when either users or servers are distributed according to a Poisson process and the other by renewal process. 
\subsection{Notion of exceptional service and accessible batches}
We discuss the notion of exceptional service and accessible batches applicable to our service network as follows. Consider a service network with $c=2$  as shown in Figure \ref{1d-grid-nonpoisson}. Consider a user $r_i$. Let $s_j$ be the server immediately to the left of $r_i.$ We assume all users prior to  $r_i$ have already been allocated to servers $\{s_k, 1\le k\le j\}$.  MTR allocates both $r_i$ and $r_{i+1}$ to $s_{j+1}$ and allocates $r_{i+2}$ to $s_{j+2}.$ We denote $[r_i,s_{j+2}]$ as a busy cycle of the service network. We have the following queueing theory analogy.

User $r_i$ can be thought of as the first customer in a queueing system that initiates a busy period while $r_{i+1}$ sees the system busy when it arrives. Because only $r_i$ is in service at the arrival of $r_{i+1}$, $r_{i+1}$ enters service with $r_i$ and the two customers form a batch of size 2. and depart at time $s_{j+1}$.  This is an example of an {\it accessible batch} \cite{Goswami11}. An accessible batch admits subsequent arrivals, while the service is on, until the server capacity $c$ is reached.

The service time for the batch, $r_i,r_{i+1}$, is described by the random variable $Z_{j+1}$ which is different or {\it exceptional} when compared to service times of successive batches such as the one consisting of $r_{i+2}$. The service time for the second batch is $X_{j+2}.$ Note that, $Z_{j+1}$ only depends on $X_{j+2}$ and $Y_{i+2}.$ Thus  when either $X_{j+2}$ or $Y_{i+2}$ is described by a Poisson process and the other by renewal process, $Z_{j+1}$ converges to a random variable $Z$ under steady state conditions. Denote $F_Z(x)$ and $f_Z(x)$ as the distribution and density functions for the random variable $Z$. Thus the service network can be mapped to an  exceptional service with accessible batches queueing ({\it ESABQ}) model.  We formally define ESABQ as follows.\\

\noindent{\bf ESABQ:} {\it Consider a queueing system where customers are served in batches of maximum size $c$. A customer entering the queue and finding fewer than $c$ customers in the system joins the current batch and enters service at once, otherwise it joins a queue. After a batch departs leaving $k$ customers in the buffer, $\min(c,k)$ customers form a batch and enter service immediately. There are two different service times cdfs, $F_Z(x)$ (exceptional batch) with mean $\alpha_Z=1/\mu_Z$ and $F_X(x)$ (ordinary batch) with mean $\alpha_X=1/\mu$. A batch is exceptional if its oldest customer entered an empty system, otherwise it is a regular batch. When the service time expires, all customers in the server depart at once, regardless of the nature of the batch (exceptional or regular).}

\subsubsection{Evaluation of the distribution function: $F_Z(x)$}\label{eq:sub_ge}
In this Section, we compute explicit expressions for the distribution function $F_Z(x)$ applicable to our service network.\\

\noindent{\bf When $\pmb{F_X(x)\sim} \texttt{Expo}\pmb{(\mu)}$:} In this case, we invoke the memoryless property of the exponential distribution $F_X$. Thus the exceptional distribution, $F_Z$, is
\begin{align}
F_Z(x) = F_X(x) = 1-e^{-\mu x}, x\ge0.
\end{align}

\noindent{\bf When $\pmb{F_Y(x)\sim} \texttt{Expo}\pmb{(\lambda)}$:}
Using the memoryless property of $F_Y$, $F_Z$ can be computed as  
{\small
\begin{align}
F_Z(x) &= \text{Pr}(X-Y<x|Y<X)= \text{Pr}(X-Y<x|X-Y >0)\nonumber\\
&= \frac{\text{Pr}(X-Y<x) - \text{Pr}(X-Y<0)}{1 - \text{Pr}(X-Y<0)}\nonumber\\
&= \frac{D_{XY}(x) - D_{XY}(0)}{1-D_{XY}(0)},\;x\ge0,\label{eq:first_busy}
\end{align}
}
\noindent where $D_{XY}(x)$ is the distribution of the random variable $X-Y$ (also known as difference distribution). $D_{XY}(x)$ can be expressed as
{\small
\begin{align}
D_{XY}(x) &= \text{Pr}(X-Y\le x) = \int_0^\infty\text{Pr}(X-y\le x)\text{Pr}(Y=y) dy\nonumber\\
&= \int_0^{\infty} F_X(x+y)\lambda e^{-\lambda y} dy = \int_x^{\infty} F_X(z)\lambda e^{-\lambda (z-x)} dz\nonumber\\ 
&= \lambda e^{\lambda x}\left[\int_0^{\infty} F_X(z) e^{-\lambda z} dz - \int_0^{x} F_X(z) e^{-\lambda z} dz \right]\nonumber\\
&= \lambda e^{\lambda x}\left[\mathcal{A}(F_X)-\mathcal{B}(x)\right],\label{eq:diff_comm}
\end{align}
}

\noindent where $\mathcal{A}$ is the Laplace Transform operator on the function $F_X$ and  $\mathcal{B}(x)$ is denoted by
$$\mathcal{B}(x) = \int_0^{x} F_X(z) e^{-\lambda z} dz$$ 
Clearly $\mathcal{B}(0) =0.$ Thus combining \eqref{eq:first_busy} and \eqref{eq:diff_comm} yields

\begin{align}
F_Z(x) &= \frac{\lambda e^{\lambda x}\left[\mathcal{A}(F_X)-\mathcal{B}(x)\right] - \lambda\mathcal{A}(F_X)}{1-\lambda\mathcal{A}(F_X)},\label{eq:ge_final}\\
f_Z(x) &= \frac{\lambda^2 e^{\lambda x}\left[\mathcal{A}(F_X)-\mathcal{B}(x)\right]-\lambda F_X(x)}{1-\lambda\mathcal{A}(F_X)},\\
\alpha_Z &= \int_{0}^{\infty}xf_Z(x)dx,\;\sigma_Z^2 = \left[\int_{0}^{\infty}x^2f_Z(x)dx\right] - \alpha_Z^2.
\end{align}

Expressions for $\mathcal{B}(x)$ are presented in Table \ref{tbltraff}. We can evaluate $\mathcal{A}(F_X)$ by setting $\mathcal{A}(F_X) = B(\infty).$ Detailed derivations are relegated to Appendix \ref{app-Ge-distribs}.

\subsection{General requests and Poisson distributed servers ({\it GRPS})}\label{sub:grps1}
From our discussion in Section \ref{eq:sub_ge}, it is clear that when servers are distributed according to a Poisson process, the exceptional service time distribution equals the regular batch service time distribution. In such a case we have the following queueing model. \\

\noindent{\it Under GRPS, inter-arrival times and batch service times are, respectively, arbitrarily and exponentially distributed. Before initiating a service, a server finds the system in any of the following conditions. (\romannumeral 1) $1\le n\le c-1$ and (\romannumeral 2) $n\ge  c.$  Here $n$ is the number of customers in the waiting buffer. For case (\romannumeral 1) the server provides service to all $n$ customers and admits subsequent arrivals until $c$ is reached. For case (\romannumeral 2) the server takes $c$ customers with no admission for subsequent customers arriving within its service time.}\\

In such a case ESABQ can directly be modeled as a special case of a renewal input bulk service queue with accessible and non-accessible batches proposed in \cite{Goswami11}  with parameter values $a=1$ and $d = b = c.$ Let $N_s$ and $N_q$ denote random variables for numbers of customers in the system and in the waiting buffer respectively for ESABQ under GRPS. We borrow the following definitions from \cite{Goswami11}.
\begin{align}\label{eq:probs}
P_{n,0} &= \Pr[N_s = n]; 0\le n\le c-1\nonumber\\
P_{n,1} &= \Pr[N_q = n]; n\ge0.
\end{align}
Using results from \cite{Goswami11} we obtain the following expressions for equilibrium queue length probabilities.
\begin{align}\label{eq:probs2}
P_{0,1} &= \frac{C}{\mu}\bigg[\frac{r_0^{c-1}-r_0^c}{1-r_0^c}+\frac{1}{r_0}-1\bigg],\nonumber\\
P_{n,1} &= \frac{Cr_0^{n-1}(1-r_0)}{\mu(1-r_0^c)}; n\ge1,
\end{align}
where $0<r_0<1$ is the real root of the equation $r=F^*_Y(\mu-\mu r^c)$  and $C$ is the normalization constant\footnote{The normalization constant $C$ derived in \cite{Goswami11} is incorrect. The correct constant for our case is given in \eqref{eq:probs_C}.} given by
{\footnotesize
\begin{align}\label{eq:probs_C}
C = \lambda\bigg[\frac{1-\omega^{c}}{1-\omega} + \frac{1}{1-r_0} - \frac{\omega(r_0-F^*_Y(\mu))}{r_0^c(1-r_0\omega)}\bigg(\frac{1-r_0^c}{1-r_0}-r_0^{c-1}\frac{1-w^c}{1-w}\bigg)\bigg]^{-1},
\end{align}
}
\noindent with $\omega = 1/F^*_Y(\mu)$. We then derive the expected queue length as
\begin{align}\label{eq:qlen}
\mathbb{E}[N_q] &= \sum\limits_{n=0}^{\infty}nP_{n,1} = \sum\limits_{n=1}^{\infty}n\frac{Cr_0^{n-1}(1-r_0)}{\mu(1-r_0^c)}\nonumber\\
&= \frac{C(1-r_0)}{\mu(1-r_0^c)}\sum\limits_{n=1}^{\infty}nr_0^{n-1} = \frac{C}{\mu(1-r_0^c)(1-r_0)}.
\end{align}

Applying Little's law and considering the analogy between our service network and ESABQ we obtain the following expression for the expected request distance.
\begin{align}
\mathbb{E}[D] = \frac{C}{\lambda\mu(1-r_0^c)(1-r_0)} + \frac{1}{\mu}.
\end{align} 
\subsection{Poisson distributed requests and general distributed servers ({\it PRGS})}\label{sub:prgs1}

As discussed in Section \ref{eq:sub_ge}, if servers are placed on a $1$-d line according to a renewal process with requests being Poisson distributed, the service time distribution for the first batch in a busy period differs from those of subsequent batches. Below we derive expressions for queue length distribution and expected request distance for ESABQ under PRGS.

\subsubsection{Queue length distribution}\label{sub:ql_prgs}
We use a supplementary variable technique to derive the queue length distribution for ESABQ under PRGS as follows.

Let $L(t)$  be the number of customers at time $t\geq 0$, $R(t)$ the residual service time at time $t\geq 0$ (with $R(t)=0$ if $L(t)=0$), and $I(t)$ the type of service at time $t\geq 0$ with $I(t)=1$ (resp. $I(t)=2$) if exceptional (resp. ordinary) service time.

Let us write the Chapman-Kolmorogov equations for the Markov chain $\{(L(t), R(t), I(t)),\, t\geq 0\}$.

For $t\geq 0$, $n\geq 1$, $x>0$, $i=1,2$ define 
\begin{eqnarray*}
p_t(n,x;i)&=&\P(L(t)=n, R(t)<x, I(t)=i)\\
p_t(0)&=&\P(L(t)=0). 
\end{eqnarray*}
Also, define for $x>0$, $i=1,2$,
\[
p(n,x;i)=\lim_{t\to\infty} p_t(n,x;i) \quad \hbox{and} \quad  p(0)=\lim_{t\to\infty}p_t(0).
\]
By analogy with the analysis for the M/G/1 queue we get 
\[
\frac{\partial}{\partial t} p_t(0)=-\lambda p_t(0)+ \sum_{k=1}^c \frac{\partial }{\partial x} p_t(k,0;1) + \sum_{k=1}^c \frac{\partial}{\partial x} p_t(k,0;2),
\]
so that, by letting $t\to\infty$,
\begin{equation}
\lambda p(0)=\sum_{k=1}^c \left(\frac{\partial }{\partial x} p(k,0;1) +\frac{\partial}{\partial x} p(k,0;2)\right).
\label{eq:p0}
\end{equation}

With further simplification (See Appendix \ref{sub:ck}), for $n\geq 1, x>0$ we get
\begin{align}
&\frac{\partial}{\partial x} g(n,x)-\lambda g(n,x)- \frac{\partial}{\partial x} g(n,0)+
\lambda g(n-1,x){\bf 1}(n\geq 2)\nonumber\\
&+\lambda p(0)F_Z(x){\bf 1}(n=1)+ F_X(x) \frac{\partial}{\partial x} g(n+c,0) = 0,
\label{eq:21}
\end{align}
\noindent where $g(n,x)=p(n,x;1)+p(n,x;2)$ for $n\geq 1$, $x>0$. Introduce 
\[
G(z,s):=\sum_{n\geq 1}z^n \int_0^\infty e^{-sx}  g(n,x) dx \quad \forall |z|\leq 1, \,s\geq 0. 
\]
Denote by $F_{Z}^*(s)=\int_0^\infty e^{-sx} dF_{Z}(x)$ the LST of $F_{Z}$ for $s\geq 0$. Note that
\[
\int_0^\infty e^{-s x} F_{Z/X}(x)dx=\frac{F^*_{Z/X}(s)}{s}, \quad \forall s>0.
\]
Multiplying both sides of (\ref{eq:21}) by $z^n e^{-sx}$, integrating over $x\in [0,\infty)$ and summing over all $n\geq 1$, yields
\begin{align}
\left(\lambda(1-z)-s\right)G(z,s) = &\lambda zp(0)F^*_Z(s)- \sum_{n\geq 1}z^n \frac{\partial}{\partial x} g(n,0)\nonumber\\
&+F^*_X(s)\sum_{n\geq 1} z^n \frac{\partial}{\partial x} g(n+c,0))
\label{eq:22}
\end{align}
where $\lambda p(0)= \sum_{k=1}^c \frac{\partial}{\partial x}g(k,0)$ from (\ref{eq:p0}). We have
\begin{align}
\frac{1}{z^c}\sum_{n\geq 1} z^{n+c} \frac{\partial}{\partial x} g(n+c,0)) 
= \frac{1}{z^c}\sum_{n \geq 1} z^n\frac{\partial}{\partial x} g(n,0)-\frac{1}{z^c}H(z)
\end{align}
where $H(z)=\sum_{k=1}^c  z^k a_k$ with $a_k:=\frac{\partial}{\partial x} g(k,0),$ for $k=1,\ldots,c$. Introducing the above into (\ref{eq:22}) gives
{\small
\begin{align}
\left(\lambda(1-z)-s\right)G(z,s)=&\left(\frac{F^*_X(s)}{z^c}-1\right)\Psi(z)\nonumber\\&-F^*_X(s)\frac{H(z)}{z^c}+\lambda zp(0)F^*_Z(s)\label{eq:23}
\end{align}
}

\noindent where $\Psi(z):=\sum_{n\geq 1}z^n \frac{\partial}{\partial x}g(n,0)$.
Since $G(z,s)$ is well-defined for $|z|\leq 1$ and $s\geq 0$, the r.h.s. of (\ref{eq:23}) must vanish when $s=\lambda(1-z)$.   
This gives the relation
\[
\Psi(z) = \frac{z^c}{z^c- F^*_X(\theta(z))} \left[  -F^*_X(\theta(z))\frac{H(z)}{z^c}+\lambda zp(0)F^*_Z(\theta(z))\right]
\]
with $\theta(z)=\lambda(1-z)$ and $|z|\leq 1$. Introducing the above in (\ref{eq:23}) gives
\begin{align}
\lefteqn{\left(\lambda(1-z)-s\right)G(z,s)= -F^*_X(s)\frac{H(z)}{z^c}+\lambda zp(0)F^*_Z(s)}\nonumber\\
&+\frac{F^*_X(s)-z^c}{z^c-F^*_X(\theta(z))}\left[\lambda zp(0)F^*_Z(\theta(z)) -F^*_X(\theta(z))\frac{H(z)}{z^c}\right].
\label{eq:24}
\end{align}
Let $N(z)$ be the $z$-transform of the stationary number of customers in the system. Integrating by part, we get for $n\geq 1$,
\[
s\int_0^\infty e^{-sx} g(n,x)dx= \int_0^\infty e^{-sx}dg(n,x),
\]
so that 
\begin{align}
\label{Nz}
& \lim_{s\to\infty}s\int_0^\infty e^{-sx} g(n,x)dx = \lim_{s\to 0}  \int_0^\infty e^{-sx}dg(n,x)\nonumber \\
&= \int_0^\infty dg(n,x)=g(n,\infty),
\end{align}
where the interchange between the limit and the integral sign is justified by the bounded convergence theorem. Therefore, 
\begin{eqnarray}
N(z)&=&\sum_{n\geq 1} z^n g(n,\infty)+p(0)\nonumber\\
&=& \sum_{n\geq 1} z^n \lim_{s\to\infty}s\int_0^\infty e^{-sx} g(n,x)dx\quad \hbox{from } (\ref{Nz})\nonumber \\
&=&\lim_{s\to 0} s G(z,s)+p(0), \label{Nz-2}
\end{eqnarray}
where the  interchange  between the summation over $n$ and the integral sign is again justified by the bounded convergence theorem.
Letting now $s\to 0$ in (\ref{eq:24}) and using (\ref{Nz-2}), gives
{\footnotesize
\begin{align}
\theta(z)N(z)= \frac{1-z^c}{z^c -F^*_X(\theta(z))}&\left[-F^*_X(\theta(z))\frac{H(z)}{z^c}+\lambda zp(0)F^*_Z(\theta(z))\right]\nonumber\\&- \frac{H(z)}{z^c}+\lambda  p(0). \label{eq:25} 
\end{align}
}
By noting that $\lambda p(0)=\sum_{k=1}^c a_k$ (cf. (\ref{eq:p0})), Eq. (\ref{eq:25}) can be rewritten as
{\footnotesize
\begin{align}
N(z)=\frac{1}{\theta(z)}&\bigg(\frac{z(1-z^c)}{z^c-F^*_X(\theta(z))} \sum_{k=1}^c a_k\left[F^*_Z(\theta(z))-z^{k-c-1}F^*_X(\theta(z))\right]\nonumber\\&+\sum_{k=1}^c a_k (1-z^{k-c})\bigg).
\label{eq:30a}
\end{align}
}
The r.h.s. of (\ref{eq:30a}) contains $c$ unknown constants $a_1,\ldots, a_c$ yet to be determined. Define $A(z)=F^*_X(\theta(z))$. It can be shown that $z^c-A(z)$ has $c-1$ zeros inside and one on the unit circle, $|z|= 1$ (See Appendix \ref{app-mg1-rouche}). Denote by $\xi_1,\ldots,\xi_q$ the $1\leq q\leq c$ distinct zeros of $z^c -A(z)$ in $\{|z|\leq 1\}$, with multiplicity $n_1,\ldots,n_q$, respectively, with $n_1+\cdots+n_q=c$. Hence, 
\[
z^c- F_X^*(k(z))= \gamma \prod_{i=1}^q (z-\xi_i)^{n_i}.
\]
Since $z^c -A(z)$ vanishes when $z=1$ and that $\frac{d}{dz}(z^c -A(z))|_{z=1}=c-\rho>0$, we conclude that $z^c -A(z)$ has one zero of multiplicity one at $z=1$.

Without loss of generality assume that $\xi_q=1$ and let us now focus on the zeros $\xi_1,\ldots,\xi_{q-1}$. When $z=\xi_i$, $i=1,\ldots,q-1$, 
the term $F^*_Z(\theta(z))-z^{k-c-1}F^*_X(\theta(z))$ in (\ref{eq:30a}) must have a zero of multiplicity (at least) $n_i$ since $N(\xi_i)$ is well defined. This gives $c-1$ linear equations to
be satisfied by $\xi_1,\ldots, \xi_q$. In the particular case where all zeros have multiplicity one (see Appendix \ref{sub:rem1}), namely $q=c$, these $c-1$ equations are
\begin{equation}
\label{eq:1-c-1}
\sum_{k=1}^c a_k\left[F^*_Z(\theta(\xi_i))- \xi_i^{k-c-1}F^*_X(\theta(\xi_i))\right]=0, \,\, i=1,\ldots, c-1.
\end{equation}
With $U(z):=F^*_Z(\theta(z))/F^*_X(\theta(z))$ (\ref{eq:1-c-1}) is equivalent to
\begin{equation}
\label{eq:30}
\sum_{k=1}^c a_k\left[U(\xi_i)- \xi_i^{k-c-1})\right]=0, \,\, i=1,\ldots, c-1,
\end{equation}
since $F^*_X(\theta(\xi_i))\not=0$ for $i=1,\ldots,c-1$ ($F^*_X(\theta(\xi_i))=0$ implies that $\xi_i$=0 which contradicts that $\xi_i$ a zero of $z^c-F^*_X(\theta(z))$
since $F^*_X(\theta(0))= F^*_X(\lambda)>0$).
Eq. (\ref{eq:30a}) can be rewritten as
{\footnotesize
\begin{equation}
N(z)
=\frac{ \sum_{k=1}^c a_k\left[z^c-z^k+ z(1-z^c) F^*_Z(\theta(z))- (1-z^k)F^*_X(\theta(z))\right]}{\theta(z) (z^c-F^*_X(\theta(z))}.
\label{value:Nz}
\end{equation}
}
A $c$-th equation is provided by the normalizing condition $N(z)=1$. Since the numerator and denominator in (\ref{value:Nz}) have a zero of order $2$ at $z=1$, 
differentiating twice the numerator and the denominator w.r.t $z$ and letting $z=1$ gives
\begin{equation}
\label{eq:c}
\sum_{k=1}^c a_k (c(1+\rho_z)-\rho k) = \lambda (c-\rho),
\end{equation}

\noindent where $\rho_z = \lambda\alpha_Z.$ We consider few special cases of the model in Appendix \ref{app:verify} and verify with the expressions of queue length distribution available in the literature.


%
%

%

\subsubsection{Expected request distance}
From \eqref{value:Nz} the expected queue length is
{\footnotesize
\begin{align}
&\overline{N}=\frac{d}{dz} N(z)\Big|_{z=1}\nonumber\\
&= \frac{1}{2\lambda(c-\rho)^2}\sum_{k=1}^c a_k\Biggl[
\lambda^2 \sigma^{(2)}_Zc(c-\rho)+ \lambda^2 \sigma^{(2)}_X c(1+\rho_z-k)\nonumber\\
&+(ck(c-k)+k(k-1)\rho-c(c-1))\rho+2c^2 \rho_z -c(c+1)\rho_z\rho\Biggr],
\label{eq:mean-QL}
\end{align}
}

\noindent where $\sigma^{(2)}_Z$ and $\sigma^{(2)}_X$ are the second order moments of distributions $F_Z$ and $F_X$ respectively. Again by applying Little's law and considering the analogy between our service network with ESABQ we get the following expression for the expected request distance.
\begin{align}
\mathbb{E}[D] = \overline{N}/\lambda.
\end{align} 

\section{Discussion of Unidirectional Allocation Policies}\label{sec:gen}
In this section we describe generalizations of models and results for unidirectional allocation policies. We first consider the case when inter-user and inter-server distances both have general distributions.

\begin{figure}
\includegraphics[width=0.4\textwidth]{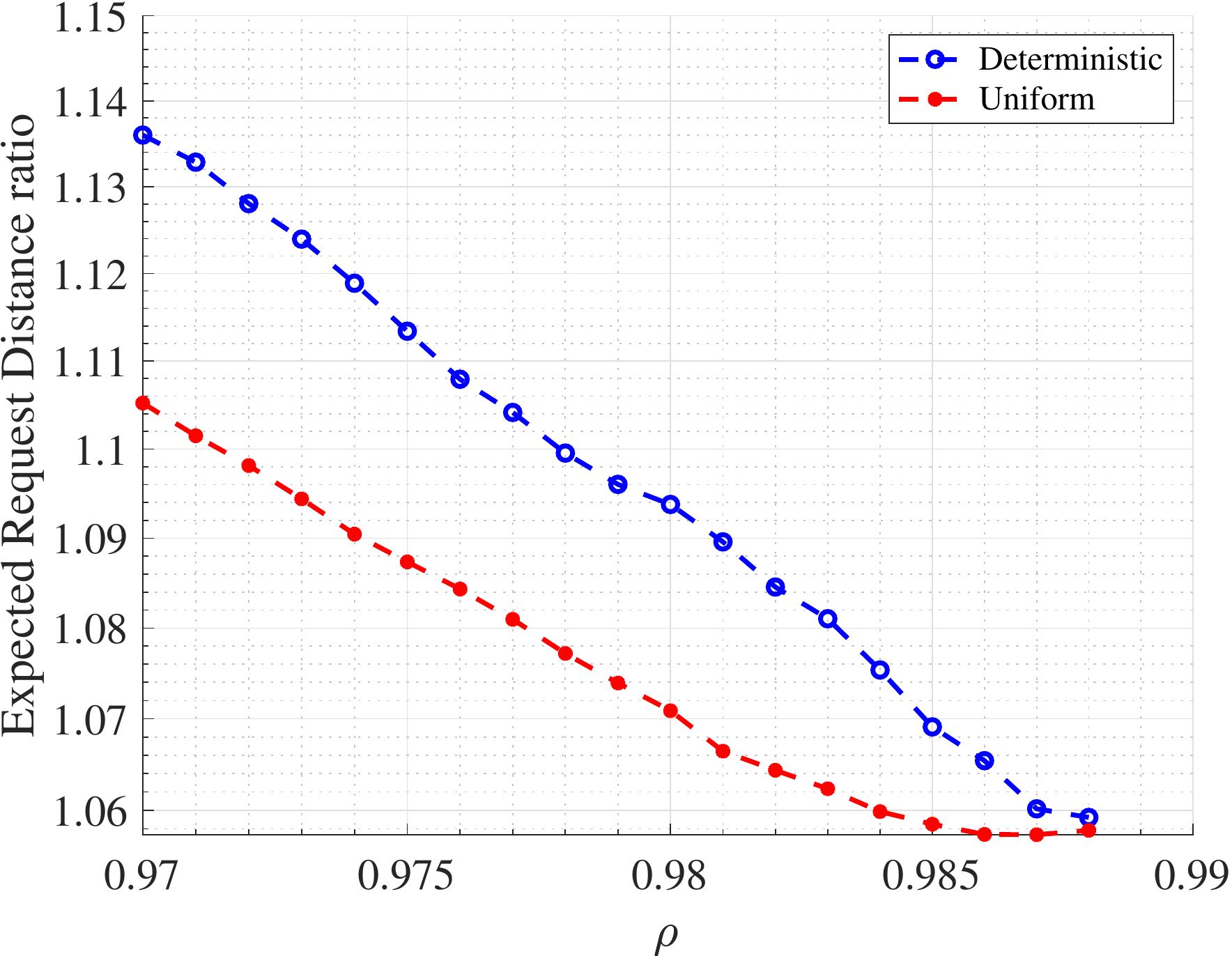}
\caption{The plot shows the ratio $\mathbb{E}[D]/\overline{D}_{s}$ for deterministic and uniform inter-server distance distributions.}
\label{1d-grid-temp-vs-spat-deterministic}
\vspace{-0.15in}
\end{figure}
\subsection{Heavy traffic limit for general request and server spatial distributions}
Consider the case when the inter-user and inter-server distances each are described by general distributions. We assume server capacity, $c=1$. As $\rho \to 1$, we conjecture that the behavior of MTR approaches that of the G/G/1 queue. One argument in favor of our conjecture is the following. As $\rho \to 1$, the busy cycle duration tends to infinity. Consequently, the impact of the exceptional service for the first customer of the busy period on all other customers diminishes to zero as there is an unbounded increasing number of customers served in the busy period.

It is known that in heavy traffic waiting times in a G/G/1  queue  are exponential distributed and the mean sojourn time is given by $\alpha_X+[(\sigma_X^2+\sigma_Y^2)/2\alpha_Y(1-\rho)]$  \cite{Gross98}. We expect the expected request distance to exhibit similar behavior. Thus we have the following conjecture.

\begin{conjecture}
At heavy traffic i.e. as $\rho \to 1$, the expected request distance for the G/G/1 spatial system with $c=1$ is given by
\begin{align}
\mathbb{E}[D] = \alpha_X + \frac{\sigma_X^2+\sigma_Y^2}{2\alpha_Y(1-\rho)}.\label{eq:temp_g_g_1}
\end{align}
\end{conjecture}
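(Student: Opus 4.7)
The plan is to argue that in the heavy-traffic limit the exceptional-service perturbation of the MTR system becomes asymptotically negligible, so that the spatial queue inherits the classical G/G/1 heavy-traffic sojourn-time asymptotics. As a first step I would formalize the mapping developed in Section~\ref{sec:mg1}: with $c=1$, MTR induces a queueing process in which inter-arrivals $\{Y_i\}$ are i.i.d.\ with cdf $F_Y$, service times are i.i.d.\ $F_X$ for every customer except the first customer of each busy period, whose service has the exceptional distribution $F_Z$. Thus the MTR system coincides with a standard G/G/1 queue driven by $\{Y_i\}$ and $\{X_j\}$ except at busy-period regenerations.

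I would then invoke Kingman's heavy-traffic theorem: for a G/G/1 queue with $\rho \to 1$ and finite second moments $\sigma_X^2,\sigma_Y^2$, the scaled stationary waiting time $(1-\rho)W$ converges in distribution to an exponential random variable with mean $(\sigma_X^2+\sigma_Y^2)/(2\alpha_Y)$. Combined with the uniform integrability of $(1-\rho)W$ established by Szczotka, this yields $\mathbb{E}[W] \sim (\sigma_X^2+\sigma_Y^2)/(2\alpha_Y(1-\rho))$ as $\rho \to 1$. Adding the mean service $\alpha_X$ experienced by a typical (non-initial) customer would then give the target formula $\mathbb{E}[D]=\alpha_X+(\sigma_X^2+\sigma_Y^2)/(2\alpha_Y(1-\rho))$.

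The critical step is to show that the exceptional-service perturbation contributes only $o(1/(1-\rho))$ and therefore cannot contaminate the leading Kingman term. I would couple the MTR system to the reference G/G/1 queue on the same sequences $\{Y_i, X_j\}$ and compare their Lindley recursions cycle by cycle. Within each regenerative busy cycle the two systems differ only because the first service is drawn from $F_Z$ instead of $F_X$; since $\alpha_Z$ is bounded by $\alpha_X+\alpha_Y$ under the construction in Section~\ref{eq:sub_ge}, each busy-period perturbation contributes an $O(1)$ term to the cycle sojourn sum. Writing $\mathbb{E}[D]$ through a renewal-reward ratio as (expected per-cycle sojourn sum)/(expected number of customers per cycle), and using the standard estimate that this denominator grows like $1/(1-\rho)$ in heavy traffic, the amortized perturbation is $O(1-\rho)$ and is dominated by the Kingman term.

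The main obstacle will be making this last step fully rigorous. The heavy-traffic literature provides Kingman's weak convergence and Szczotka's uniform integrability for the \emph{unperturbed} G/G/1 queue, but a separate argument is needed to transfer these moment asymptotics to MTR in the presence of exceptional starts. A further technical subtlety is that $F_Z$ itself depends on the joint law of $(X,Y)$ and in principle on $\rho$, so one must verify that $\alpha_Z$ and its second moment remain bounded uniformly as $\rho \to 1$ before the coupling can be closed; this is immediate in the exponential cases of Sections~\ref{sub:grps1} and~\ref{sub:prgs1} but requires additional regularity assumptions on $F_X$ and $F_Y$ in the fully general setting, which is presumably why the statement is offered as a conjecture rather than a theorem.
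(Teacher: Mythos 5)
The statement you are proving is presented in the paper as a \emph{conjecture}, and the paper offers no proof of it: its entire justification is the one-sentence heuristic that, as $\rho\to 1$, busy cycles contain an unboundedly growing number of customers, so the exceptional service of the first customer of each busy period has vanishing influence, after which the known G/G/1 heavy-traffic mean sojourn time $\alpha_X+(\sigma_X^2+\sigma_Y^2)/(2\alpha_Y(1-\rho))$ is imported wholesale; the only supporting evidence is the simulation in Figure~\ref{1d-grid-temp-vs-spat-deterministic} showing the ratio $\mathbb{E}[D]/\overline{D}_s\to 1$. Your proposal rests on exactly the same core idea --- the exceptional-service perturbation is asymptotically negligible and the system inherits classical G/G/1 asymptotics --- but you go considerably further than the paper by laying out a concrete programme to make it rigorous: Kingman's weak-convergence theorem plus Szczotka-type uniform integrability to get the moment asymptotics for the unperturbed queue, and a pathwise coupling with a renewal-reward amortization to show the per-busy-cycle $O(1)$ perturbation contributes only $O(1-\rho)$ to $\mathbb{E}[D]$. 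That buys a clear identification of what is actually missing (transfer of uniform integrability to the perturbed system, and uniform-in-$\rho$ moment bounds on $F_Z$, which depends on the joint law of $(X,Y)$), and your closing observation that these gaps are precisely why the claim is stated as a conjecture is consistent with the paper. One caveat: your intermediate claim that $\alpha_Z\le\alpha_X+\alpha_Y$ is not justified in general (under PRGS, $Z$ is distributed as $X-Y$ conditioned on $X>Y$, whose mean need not obey that bound without further assumptions on $F_X$), so the coupling step should be phrased in terms of finiteness and $\rho$-uniform boundedness of the first two moments of $Z$ rather than that specific inequality.
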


Denote by $\overline{D}_{s}$ the average request distance as obtained from simulation. We plot the ratio $\mathbb{E}[D]/\overline{D}_{s}$ across various inter-request and inter-server distance distributions in Figure \ref{1d-grid-temp-vs-spat-deterministic}. It is evident that as $\rho \to 1,$ the ratio $\mathbb{E}[D]/\overline{D}_{s}$ converges to $1$ across different inter-server distance distributions. 

\subsection{Heterogeneous server capacities under PRGS}\label{sec:het}
We now proceed to analyze a setting where server capacity is a random variable. Assume server capacity $\mathcal C$ takes values from $\{1,2,\ldots,c\}$ with distribution $\texttt{Pr}(\mathcal C=j) = p_j, \forall j\in\{1,2,\ldots,c\}$, s.t. $\sum_{j=1}^c p_j= 1$ and $p_c > 0.$ We also assume the stability condition $\rho<\overline{\mathcal C}$ where $\overline{\mathcal C}$ is the average server capacity. Denote $H$ as the random variable associated with number of requests that traverse through a point just after a server location\footnote{An analysis for the distribution of number of requests that traverse through any random location would involve the notions of exceptional service and accessible batches.}.
\subsubsection{Distribution of $H$}
Let $V$ denote the number of new requests generated during a service period with $k_v = \texttt{Pr}(V=v), \forall v\ge0.$ According to the law of total probability, it holds that
\begin{align}
k_v = \int\limits_0^\infty \texttt{Pr}(V=v|X=\nu)f_X(\nu)= \frac{1}{v!}\int\limits_0^\infty e^{-\lambda\nu}(\lambda\nu)^vdF_X(\nu).
\end{align} 
\noindent Then the corresponding generating function $K(z)$ is denoted by
\begin{align}
K(z) = \sum\limits_{v=0}^{\infty}k_vz^v = F_X^{*}(\lambda(1-z)).
\end{align} 
We now consider an embedded Markov chain generated by $H$. Denote the corresponding transition matrix as $M.$ Then we have
{\footnotesize
\begin{equation}\label{eq:utility}
    M_{m,l}=
\begin{cases}
     \sum\limits_{i=0}^{c-m}k_iP_{i+m},& 0\le m\le c, l = 0;\\
     \sum\limits_{i=0}^{c}k_{i+l-m}p_i,& 0\le m\le l, l \ne 0;\\
     \sum\limits_{i=m-l}^{c}k_{i+l-m}p_i,& l+1\le m\le c+l, l \ne 0;\\
     0, & o.w.,
\end{cases}
\end{equation}
}	

\noindent where $P_{i} = \sum_{j=i}^{c}p_j$ and $p_0 = 0.$ Let $\pi = [\pi_j, j\ge0]$ and $N(z) = \sum_{j\ge0}\pi_jz^j$  denote the steady state distribution and its $z$-transform respectively. $\pi$ is obtained out by solving
\begin{align}
\pi_l = \sum\limits_{m=0}^{\infty}\pi_mM_{m,l}, l=0,1,\ldots.
\end{align}
Thus we have for $l\in\mathbb{N},$
\begin{align}
\pi_0 &= \sum\limits_{m=0}^{c}\pi_m\sum\limits_{i=0}^{c-m}k_iP_{i+m},\nonumber\\
\pi_l &= \sum\limits_{m=0}^{l}\pi_m\sum\limits_{i=0}^{c}k_{i+l-m}p_i + \sum\limits_{m=l+1}^{c+l}\pi_m\sum\limits_{i=m-l}^{c}k_{i+l-m}p_i.
\end{align} 
Multiplying by $z^l$ and summing over $l$ gives
{\footnotesize
\begin{align}
N(z) &= E_{\pi} + v_1(z) + v_2(z)\label{eq:pi_var_c}\\
E_{\pi} &= \pi_0\sum\limits_{i=0}^{c-1}k_iP_{i+1}+\sum\limits_{m=1}^{c-1}\pi_m\sum\limits_{i=m}^{c-1}k_{i-m}P_{i+1}\\
v_1(z) &= \sum\limits_{l=0}^\infty z^l \sum\limits_{m=0}^{l}\pi_m\sum\limits_{i=0}^{c}k_{i+l-m}p_i \label{eq:lhs_var_c}\\
v_2(z) &= \sum\limits_{l=0}^\infty z^l \sum\limits_{m=l+1}^{c+l}\pi_m\sum\limits_{i=m-l}^{c}k_{i+l-m}p_i.\label{eq:rhs_var_c}
\end{align} 
}

The expressions for $v_1(z)$ and $v_2(z)$ can be further simplified (see Appendix \ref{app-het}) to
{\footnotesize
\begin{align}
v_1(z) &= N(z)\bigg\{\sum\limits_{i=0}^{c}p_iz^{-i}\bigg[K(z)-\sum\limits_{j=0}^{i}k_jz^j\bigg]+\sum\limits_{i=0}^{c}k_iz^i\bigg\}\label{eq:l_var_c}\\
v_2(z) &= \bigg[\sum\limits_{m=0}^{c}z^{-m}\sum\limits_{i=m}^{c}k_{i-m}p_i\bigg\{N(z)-\sum\limits_{j=0}^{m-1}\pi_jz^j\bigg\}\bigg] \nonumber\\
&- N(z)\sum\limits_{i=0}^{c}k_iz^i.\label{eq:r_var_c}
\end{align} 
}

Combining \eqref{eq:pi_var_c}, \eqref{eq:l_var_c} and \eqref{eq:r_var_c} yields
{\footnotesize
\begin{align}
N(z) = E_{\pi} &+ N(z)\bigg\{K(z)\sum\limits_{i=0}^{c}p_iz^{-i}\bigg\}\nonumber\\&-\sum\limits_{j=0}^{c-1}\pi_j\sum\limits_{m=1}^{c-j}z^{-m}\sum\limits_{i=m+j}^{c}k_{i-(m+j)}p_i.
\end{align} 
}
Thus we obtain
{\footnotesize
\begin{align}\label{eq:het_l}
N(z) = \frac{E_{\pi} -\sum\limits_{j=0}^{c-1}\pi_j\sum\limits_{m=1}^{c-j}z^{-m}\sum\limits_{i=m+j}^{c}k_{i-(m+j)}p_i}{1-K(z)\sum\limits_{i=0}^{c}p_iz^{-i}}.
\end{align} 
}
Multipying numerator and denominator by $z^c$ yields
{\footnotesize
\begin{align}\label{eq:het_l}
N(z) = \frac{z^cE_{\pi} -\sum\limits_{j=0}^{c-1}\pi_j\sum\limits_{m=1}^{c-j}z^{c-m}\sum\limits_{i=m+j}^{c}k_{i-(m+j)}p_i}{z^c-K(z)\sum\limits_{i=0}^{c}p_{c-i}z^{i}}.
\end{align} 
}
To determine $N(z)$, we need to obtain the probabilities $\pi_i,0\le i\le c-1.$ It can be shown that the denominator of  \eqref{eq:het_l} has $c-1$ zeros inside and one on the unit circle, $|z|= 1$ (See Appendix \ref{app-het-rouche}). As $N(z)$ is analytic within and on the unit circle, the numerator must vanish at these zeros, giving rise to $c$ equations in $c$ unknowns.

Let $\xi_q: 1\leq q\leq c$ be the zeros of $z^c-K(z)\sum_{i=0}^{c}p_{c-i}z^{i}$ in $\{|z|\leq 1\}$. W.l.o.g let $\xi_c = 1.$ We have the following $c-1$ equations.
{\footnotesize
\begin{align}
E_{\pi} -\sum\limits_{j=0}^{c-1}\pi_j\sum\limits_{m=1}^{c-j}\xi_q^{-m}\sum\limits_{i=m+j}^{c}k_{i-(m+j)}p_i=0, \,\, i=1,\ldots, c-1,
\end{align}
}

A $c$-th equation is provided by the normalizing condition $ \lim_{z\to 1}$ $N(z)=1$. In the particular case where all zeros have multiplicity one, it can be shown that these $c$ equations are linearly independent\footnote{For all cases evaluated across uniform, deterministic and hyperexponential distributions we found the set of $c$ equations to be linearly independent.}. Once the parameters $\{\pi_i, 0\le i \le c-1\}$ are known, $\mathbb{E}[H]$ can be expressed as
\begin{align}
\mathbb{E}[H] = \overline{H} = \lim_{z\to 1}N^{\prime}(z).
\end{align}
\subsubsection{Expected Request Distance}
To evaluate the expected request distance we adopt arguments from \cite{bailey54}. Consider any interval of length $\nu$ between two consecutive servers. There are on average  $\overline{H}$ requests at the beginning of the interval , each of which must travel $\nu$ distance. New users are spread randomly over the interval and there are on an average $\lambda\nu$ new users. The request made by each new user must travel on average $\nu/2.$ Thus we have
\begin{align}\label{eq:erd_var_c}
\mathbb{E}[D] &= \frac{1}{\rho}\int_0^\infty(\overline{H}\nu+\frac{1}{2}\lambda\nu^2)dF_X(\nu)\nonumber\\
&= \frac{1}{\rho}\bigg[\frac{\overline{H}}{\mu}+\frac{\lambda}{2}\bigg(\sigma_X^2+\frac{1}{\mu^2}\bigg)\bigg].
\end{align}


\subsection{Uncapacitated request allocation}\label{sub:uncap2}
An interesting special case of the unidirectional general matching is the uncapacitated scenario. Consider the case where servers do not have any capacity constraints, i.e. $c=\infty.$ In such a case, all users are assigned to the nearest server to their right.\\

\noindent{\bf GRPS:} When $c\to\infty$ and given $0<r_0<1,$  $r_0 = F^*_Y(\mu-\mu r_0^c) = F^*_Y(\mu).$ Setting $\omega = 1/F^*_Y(\mu) = 1/r_0$ in \eqref{eq:probs_C} and simplifying yields
\begin{align}\label{eq:probs_C2}
C \to 0, \;\texttt{as}\; c\to\infty,\implies \mathbb{E}[D] \to \frac{1}{\mu} \;\texttt{as}\; c\to\infty.
\end{align}
\noindent{\bf PRGS:} Under PRGS, when $c\to\infty$ there exists no request allocated to a server other than the nearest server to its right. Again using Bailey's method as in \cite{bailey54} and setting $\overline{H} = 0$ in \eqref{eq:erd_var_c} we get
\begin{align}\label{eq:probs_prgs}
\mathbb{E}[D] \to \frac{\mu}{2}\bigg(\sigma_X^2+\frac{1}{\mu^2}\bigg) \;\texttt{as}\; c\to\infty.
\end{align}

\section{Bidirectional Allocation Policies}\label{sec:opt}
Both UGS and MTR minimize expected request distance among all unidirectional policies. In this section we formulate the bi-directional allocation policy that minimizes expected request distance. Let $\eta: R \to S$ be any mapping of users to servers. Our objective is to find a mapping $\eta^*: R \to S$, that satisfies
\begin{eqnarray} 
\nonumber& & \eta^* = \arg\min_{\eta} \sum_{i\in R} d_{\mathcal{L}} (r_i,s_{\eta(i)})\\
\label{eq:assg}& s.t. & \sum_{i\in R} \mathbbm{1}_{\eta(i)=j} \leq c, \forall j\in S
\end{eqnarray}
W.l.o.g, let $r_1\le r_2\le\cdots \le r_i\le\cdots \le r_{|R|}$ be locations of requests and $s_1\le s_2\le\cdots \le s_i\le\cdots\le s_{|S|}$ be locations of servers. We first focus on the case when $c=1$. We consider the following two scenarios.\\


\noindent {\bf Case 1: \pmb{$|R|=|S|$} }\\
When $|R|=|S|$, an optimal allocation strategy is given by the following theorem \cite{Bukac18}.
\begin{theorem}
When $|R|=|S|$, an optimal assignment is obtained by the policy: $\eta^*(i) = i, \;\forall i \in \{1,\cdots,|R|\}$ i.e. allocating the $i^{th}$ request to the $i^{th}$ server and the average request distance is given by
\begin{align}
\mathbb{E}[D] = \frac{1}{|R|}\sum\limits_{i=1}^{|R|}|s(i)-r(i)|. \label{eq:c1}
\end{align}
\label{th:equal_r_s}
\end{theorem}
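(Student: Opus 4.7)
The plan is to prove optimality by the classical exchange argument, eliminating inversions from any candidate optimal assignment one at a time without increasing the total cost, until only the identity mapping remains. Since $c=1$ and $|R|=|S|$, the feasible set in \eqref{eq:assg} is exactly the set of bijections from $R$ onto $S$, so the notions of ``inversion'' and ``monotone mapping'' carry their usual permutation-theoretic meaning.

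The core ingredient is a geometric non-crossing lemma: for any four reals $r_i\le r_j$ and $s_a\le s_b$,
\begin{equation*}
|r_i-s_a|+|r_j-s_b|\;\le\;|r_i-s_b|+|r_j-s_a|.
\end{equation*}
I would establish this by noting that the function $f(t):=|t-s_b|-|t-s_a|$ is non-increasing in $t$: it equals $s_b-s_a$ on $(-\infty,s_a]$, decreases linearly to $s_a-s_b$ on $[s_a,s_b]$, and stays at $s_a-s_b$ on $[s_b,\infty)$. Hence $f(r_j)\le f(r_i)$, which is exactly the claimed inequality after rearrangement.

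Given the lemma, suppose an optimal $\eta$ contains an inversion, i.e., indices $i<j$ with $\eta(i)>\eta(j)$. Let $\tilde\eta$ be the bijection obtained from $\eta$ by swapping the images of $i$ and $j$. Applying the lemma with $s_a=s_{\eta(j)}$ and $s_b=s_{\eta(i)}$ (so $s_a\le s_b$) shows that the total cost of $\tilde\eta$ is no greater than that of $\eta$, hence $\tilde\eta$ is also optimal. Since the number of inversions is a nonnegative integer bounded by $\binom{|R|}{2}$ and each such swap strictly decreases it, finitely many swaps transform $\eta$ into the identity bijection, which therefore achieves the minimum. The expression \eqref{eq:c1} then follows immediately by averaging the per-user distances $|s_i - r_i|$ under the identity assignment.

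I do not anticipate a real obstacle here: the mechanics are a standard bubble-sort-style exchange argument, and the only item to check carefully is that the non-crossing lemma applies cleanly irrespective of the relative positions of $r_i,r_j,s_a,s_b$ on the line. The monotonicity proof of $f$ handles all cases uniformly without requiring a case split.
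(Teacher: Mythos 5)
Your proof is correct. Note that the paper does not actually prove this theorem itself --- it imports the result from the cited reference \cite{Bukac18} --- so there is no in-paper proof to match line by line; the closest in-paper analogue is Lemma \ref{lem:nocross}, whose proof in Appendix \ref{app-lemma} establishes the same non-crossing inequality by enumerating the six relative orderings of $r_i, r_{i'}, s_j, s_{j'}$ and checking each pictorially. Your route is the same in spirit (uncross inversions via an exchange argument), but your key lemma is proved more cleanly: observing that $f(t)=|t-s_b|-|t-s_a|$ is non-increasing handles all orderings uniformly and replaces the case split entirely. The only step you wave at is that transposing an inverted pair \emph{strictly} decreases the inversion count (it does, by an odd positive amount, since only pairs involving $i$ or $j$ are affected and the pair $(i,j)$ itself is repaired); alternatively you could avoid this bookkeeping by selecting, among the finitely many optimal bijections, one with the fewest inversions and deriving a contradiction. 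Either way the argument is complete, self-contained, and arguably an improvement on what the paper provides for this statement.
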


\noindent {\bf Case 2: \pmb{$|R|<|S|$} }
This is the case where there are fewer requesters than servers. In this case, a Dynamic Programming ({\it DP}) based  algorithm (Algorithm \ref{algo:opt-dps}) obtains the optimal assignment. 

Let $C[i,j]$ denote the optimal cost (i.e., sum of distances) of assigning the first $i$ requests (counting from the left) located at $r_1\leq r_2\leq \ldots\leq r_i$ to the first $j$ servers (also counting from the left) located at  $s_1\leq s_2\leq \ldots\leq s_j$. If $j==i$, the optimal assignment is trivial due to Theorem \ref{th:equal_r_s} and $C[i,i]$ is computed easily for all $i \leq |R|$ by summing pairwise distances $d[1,1],d[2,2],\ldots,d[i,i]$ (Lines 6--7).
For the base case, $i=1, j>1$, only the first user needs to be assigned to its nearest server (Lines 9--16).
  For the general dynamic programming step, consider $j > i$. Then $C[i,j]$ can be expressed in terms of the costs of two subproblems, i.e., $C[i-1,j-1]$ and $C[i,j-1]$ (Lines 19--24). In the optimal solution, two cases are possible: either request $i$ is assigned to server $j$, or the latter is left unallocated. The former case occurs if the first $i-1$ requests are assigned to the first $j-1$ servers at cost $C[i-1,j-1]$, and the latter case occurs when the first $i$ requests are assigned to the first $j-1$ servers at cost $C[i,j-1]$. This is a consequence of the no-crossing lemma (Lemma \ref{lem:nocross}). The optimal $C[i,j]$ is chosen depending on these two costs and the current distance $d[i,j]$.

\begin{lemma}\label{lem:nocross}
In an optimal solution, $\eta^*,$ to the problem of matching users at $r_1\leq r_2\leq \ldots\leq r_{|R|}$ to servers at $s_1\leq s_2\leq \ldots\leq s_{|S|}$, where $|S| \geq |R|$, there do not exist indices $i,j$ such that $\eta^*(i)>\eta^*(i')$ when $i' > i$.
\begin{proof}
See Appendix \ref{app-lemma}.
\end{proof}
\end{lemma}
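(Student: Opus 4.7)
The plan is a standard exchange argument exploiting the Monge structure of the Euclidean cost on the line. Suppose, toward a contradiction, that $\eta^*$ is optimal yet contains a crossing: there exist indices $i<i'$ with $k:=\eta^*(i)$, $k':=\eta^*(i')$ satisfying $k>k'$. Since the servers are indexed in left-to-right order, $s_{k'}\le s_k$, and by construction $r_i\le r_{i'}$. I will define the ``uncrossed'' assignment $\eta'$ that coincides with $\eta^*$ everywhere except at $i$ and $i'$, where I swap the targets so that $\eta'(i)=k'$ and $\eta'(i')=k$. Note $\eta'$ is still feasible with respect to the capacity constraint $c=1$ because we are only permuting two previously distinct server labels.

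The heart of the argument is the inequality
\begin{equation}
|r_i-s_{k'}|+|r_{i'}-s_k|\;\le\;|r_i-s_k|+|r_{i'}-s_{k'}|,
\label{eq:monge}
\end{equation}
which says swapping cannot increase the total cost. I would establish \eqref{eq:monge} directly by case analysis on the relative order of the four reals $r_i\le r_{i'}$ and $s_{k'}\le s_k$: there are essentially three nontrivial interleavings (both $r$'s to the left of both $s$'s, both $r$'s in between, the $r$'s straddling the $s$'s, and the mirror images), and in each case a short computation with absolute values reduces the difference of the two sides to a nonnegative quantity of the form $2\min(r_{i'}-r_i, s_k-s_{k'})$ or $0$. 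This is the familiar Monge property of the $L_1$ matrix $d[i,j]=|r_i-s_j|$ induced by sorted point sequences on a line.

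Applying \eqref{eq:monge} to the swap yields $\mathrm{cost}(\eta')\le\mathrm{cost}(\eta^*)$, so $\eta'$ is also optimal. Iterating the swap on any remaining crossing (the process terminates, since each swap strictly decreases, say, the number of inverted pairs $(i,i')$ with $\eta(i)>\eta(i')$ among the finitely many user indices) produces an optimal assignment with no crossings; by restricting attention to that optimal solution we obtain the claim as it is used in Algorithm~\ref{algo:opt-dps}, namely that there exists a no-crossing optimum (which is all the dynamic programming recursion requires).

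The main obstacle is the case analysis for \eqref{eq:monge}: it is elementary but tedious, and one must be careful to cover all six orderings of the four points without sign errors. A cleaner alternative I would use if space permits is to observe that $f(t):=|r_i-t|+|r_{i'}-t|$ is nondecreasing in $|t-(r_i+r_{i'})/2|$ on either side of the midpoint, from which \eqref{eq:monge} follows once one checks that $s_{k'}$ is never farther from the interval $[r_i,r_{i'}]$ than $s_k$ under the crossing hypothesis; this sidesteps most of the bookkeeping. The only other subtlety is making precise the termination of the iterative uncrossing procedure, which is immediate because the number of inversions is a nonnegative integer strictly decreasing under each strict swap (and for swaps that leave the cost unchanged, one can reach a no-crossing optimum by choosing any monotone tiebreaking among equally-good assignments).
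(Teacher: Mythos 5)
Your proof takes essentially the same route as the paper's: an exchange (uncrossing) argument establishing $|r_i-s_{k'}|+|r_{i'}-s_k|\le|r_i-s_k|+|r_{i'}-s_{k'}|$ by case analysis on the relative orderings of the four points, which the paper carries out pictorially over its six cases. You are in fact slightly more careful than the paper, since you explicitly handle the tie case (where the swap leaves the cost unchanged, so no strict contradiction with optimality arises) by iterating the uncrossing to reach a no-crossing optimum, which is all the dynamic programming recursion actually requires.
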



The dynamic programming algorithm fills cells in an $|R|\times |S|$ matrix $C$ whose origin is in the north-west corner. The lower triangular portion of this matrix is invalid since $|R| \leq |S|$. The base cases populate the diagonal and the northernmost row, and in the general DP step, the value of a cell depends on the previously computed values in the cells located to its immediate west and diagonally north-west. As an optimization, for a fixed $i$, the $j$-th loop index needs to run only from $i+1$ through $i+|S|-|R|$ (Lines 11 and 18) instead of from $i+1$ through $|S|$. This is because the first request has to be assigned to a server $s_j$ with $j \leq |S|-|R|+1$ so that the rest of the $|R|-1$ requests have a chance of being placed on unique servers\footnote{Note that in this exposition, we consider server capacity $c=1$. If $c>1$, we simply add $c$ servers at each prescribed server location, and requests will still be placed on unique servers.}. The optimal average request distance is given by $C[|R|,|S|]$.

The time complexity of the main DP step is $O(|R| \times (|S|-|R|+1))$. Note that this assumes that the pairwise distance matrix $d$ of dimension $|R|\times |S|$ has been precomputed. The optimization applied above can be similarly applied to this computation and hence the overall time complexity of Algorithm \ref{algo:opt-dps} is $O(|R| \times (|S|-|R|+1))$. Therefore, if $|S|=O(|R|)$, the worst case time complexity is quadratic in $|R|$. However, if $|S|-|R|$ grows only sub-linearly with $|R|$, the time complexity is sub-quadratic in $|R|$.

Note that retrieving the optimal assignment requires more book-keeping. An $|R|\times |S|$ matrix $A$ stores key intermediate steps in the assignment as the DP algorithm progresses (Lines 8, 16, 21, 24). The optimal assignment vector $\pi$ can be retrieved from matrix $A$ using procedure {\sc ReadOptAssignment}.

\begin{algorithm}[ht]
\begin{algorithmic}[1]
\State \textbf{Input}: $r_1\le \cdots \le r_{|R|}$; \;$s_1\le\cdots\le s_{|S|}$
\State \textbf{Output}: The optimal assignment $\pi$

\Procedure{OptDP}{$r,s$}
	\State $d_{|R|\times|S|} = \Call{ComputePairwiseDistances}{r,s}$
	\State $C = \{ \infty \}_{|R|\times|S|}$
	\For {$i = 1,\cdots,|R|$}
		\State $C[i,i] = \Call{TrivialAssignment}{i,d}$
	\EndFor
	\State $A[|R|,|R|] = |R|$
	\State $nearest = 0$
	\State $nearestcost = C[1,1]$
	\For {$j = 2, \cdots, |S|-|R|+1$}
		\If {$d[1,j] < nearestcost$}
			\State $nearestcost = d[1,j]$
			\State $nearest = j$
		\EndIf
		\State $C[1,j] = nearestcost$
		\State $A[1,j] = nearest$
	\EndFor
	\For {$i = 2,\cdots,|R|$}
		\For {$j = i+1,\cdots,i+|S|-|R|$}
			\If {$C[i,j-1] < d[i,j]+C[i-1,j-1]$}
				\State $C[i,j] = C[i,j-1]$
				\State $A[i,j] = A[i,j-1]$
			\Else
				\State $C[i,j] = d[i,j]+C[i-1,j-1]$
				\State $A[i,j] = j$
			\EndIf
		\EndFor
	\EndFor
	\State \Return $\Call{ReadOptAssignment}{A}$
\EndProcedure

\Procedure{TrivialAssignment}{$n,d$}
  \State $Cost = 0$
  \For{$i = 1,\cdots,n$}
    \State $Cost = Cost + d[i,i]$
  \EndFor
  \State \Return $Cost$
\EndProcedure

\Procedure{ReadOptAssignment}{$A$}
  \State ${|R|,|S|} = \Call{Dimensions}{A}$
  \State $s = |S|$
  \For{$i = |R|, \cdots, 1$}
    \State $\pi[i] = A[i,s]$
    \State $s = A[i,s]-1$
  \EndFor
  \State \Return $\pi$
\EndProcedure

\end{algorithmic}
\caption{Optimal Assignment by Dynamic Programming}
\label{algo:opt-dps}
\vspace{-0.1cm}
\end{algorithm}
\begin{figure}[t!]
\centering
\includegraphics[width=1.0\columnwidth]{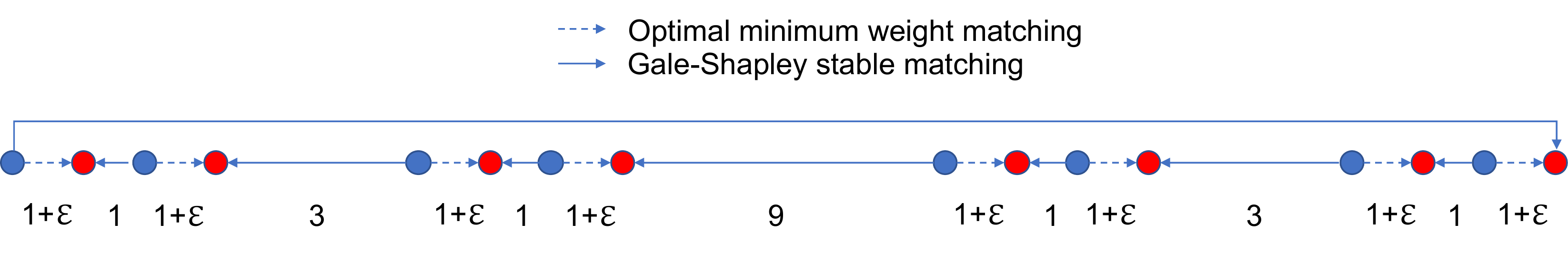}
\caption{\label{fig:opt-vs-gs}Worst case scenario for Gale-Shapley.}
\end{figure}

Another bidirectional assignment scheme is the Gale-Shapley algorithm~\cite{Gale62}, which produces stable assignments, though in the worst case it can yield an assignment that is $O(|R|^{\ln{3/2}})\approx O(|R|^{0.58})$ times costlier than the optimal assignment yielded by Algorithm \ref{algo:opt-dps}, where $|R|$ is the number of users~\cite{RT81}. The worst case scenario is illustrated in Figure \ref{fig:opt-vs-gs}, with $|R| = 2^{t-1}$, where $t$ is the number of clusters of users and servers; and the largest distance between adjacent points is $3^{t-2}$. However at low/moderate loads for the cases evaluated in Section \ref{sec:perfcomp}, we find its performance to be not much worse than optimal.


\section{Numerical Experiments}\label{sec:perfcomp}
In this section, we examine the effect of various system parameters on expected request distance under MTR policy. We also compare the performance of various greedy allocation strategies along with the unidirectional policies to the optimal strategy. 
\subsection{Experimental setup}
In our experiments, we consider a mean requester rate $\lambda \in (0,1).$ We consider various inter-server distance distributions with density one. In particular, (i) for exponential distributions, the density is set to $\mu=1$; (ii) for deterministic distributions, we assign parameter $d_0 = 1.$ (iii) for second order hyper-exponential distribution ($H_2$), denote $p_1$ and $p_2$ as the phase probabilities. Let $\mu_1$ and $\mu_2$ be corresponding phase rates. We assume $p_1/\mu_1  = p_2/\mu_2$. We express $H_2$ parameters in terms of the squared coefficient of variation, $c_v^2$, and mean inter-server distance, $\alpha_X$, i.e. we set $p_1 = (1/2)\big(1+\sqrt{(c_v^2-1)/(c_v^2+1)}\big), p_2 = 1-p_1, \mu_1 = 2p_1/\alpha_X$ and $\mu_2 = 2p_2/\alpha_X.$ Unless specified, for $H_2$ we take $c_v^2 = 4$ with $c=2.$ Also if not specified, users are distributed according to a Poisson process and servers a according to a renewal process.

We consider a collection of $10^5$ users and $10^5$ servers, i.e. $|R|=|S| = 10^5.$ We assign users to servers according to MTR. Let $R_M\subseteq R$ be the set of users allocated under MTR. Clearly $|R_M| \le |R|.$ We then run optimal and other greedy policies on the set $R_M$ and $S.$ For each of the  experiments, the expected request distance for the corresponding policy is averaged over $50$ trials. 
\subsection{Sensitivity analysis}
\subsubsection{Expected request distance vs. load}
\begin{figure}[htbp]
\centering
\includegraphics[width=0.7\linewidth]{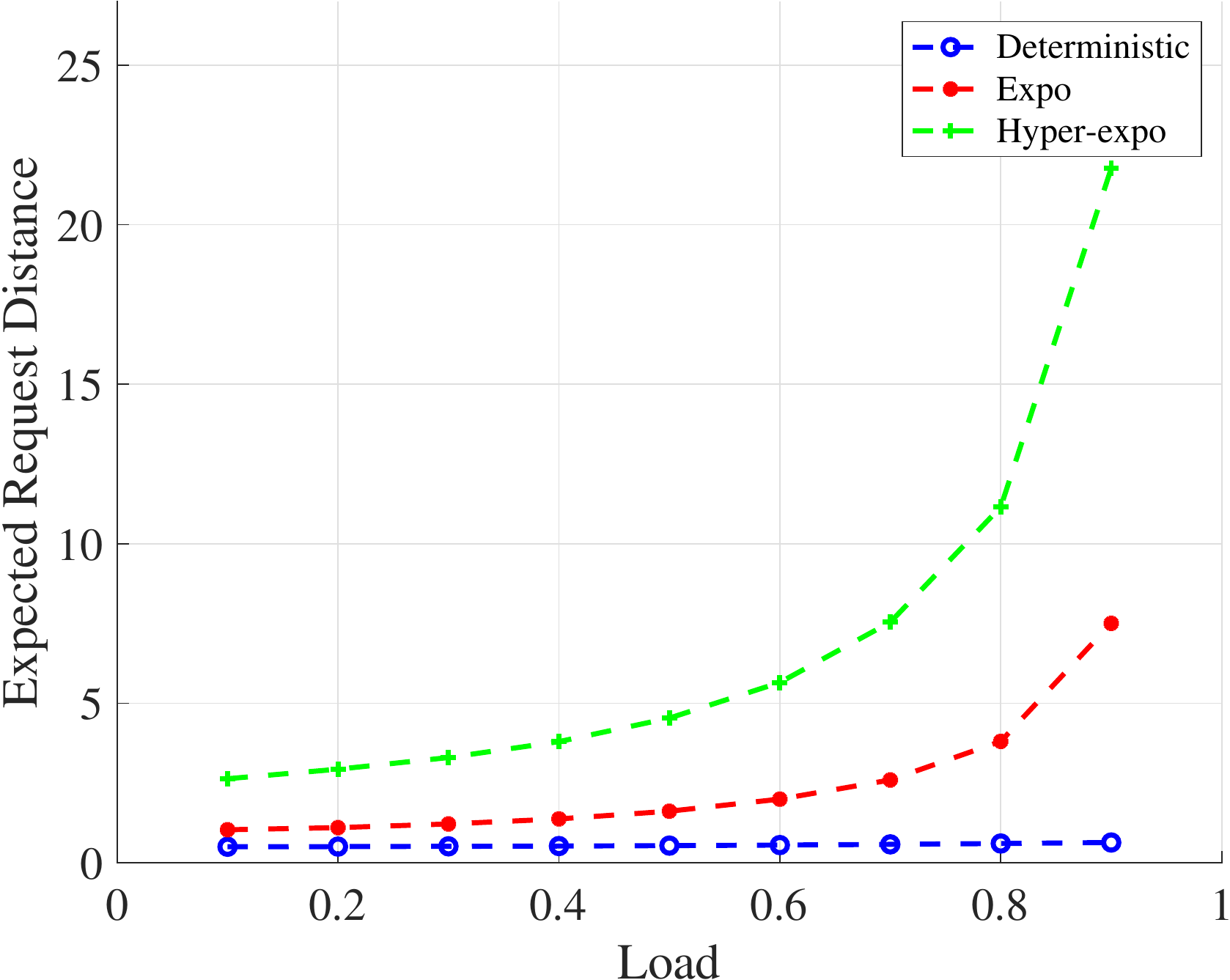}
\vspace{-0.1in}
\caption{Effect of load on expected request distance with $\pmb{c=2}$.}
\label{rho_erd}
\vspace{-0.1in}
\end{figure} 
We first study the effect of load ($= \lambda/c\mu$) on $\mathbb{E}[D]$ as shown in Figure \ref{rho_erd}. Clearly as load increases  as a function of $\mathbb{E}[D]$. Note that $H_2$ distribution exhibits the largest expected request distance and the deterministic distribution, the smallest because the servers are evenly spaced. While for $H_2,$  $c_v^2$ is larger than for the exponential distribution. Consequently servers are clustered, which increases $\mathbb{E}[D].$
\subsubsection{Expected request distance vs. squared co-efficient of variation}
\begin{figure}[htbp]
\centering
\includegraphics[width=0.7\linewidth]{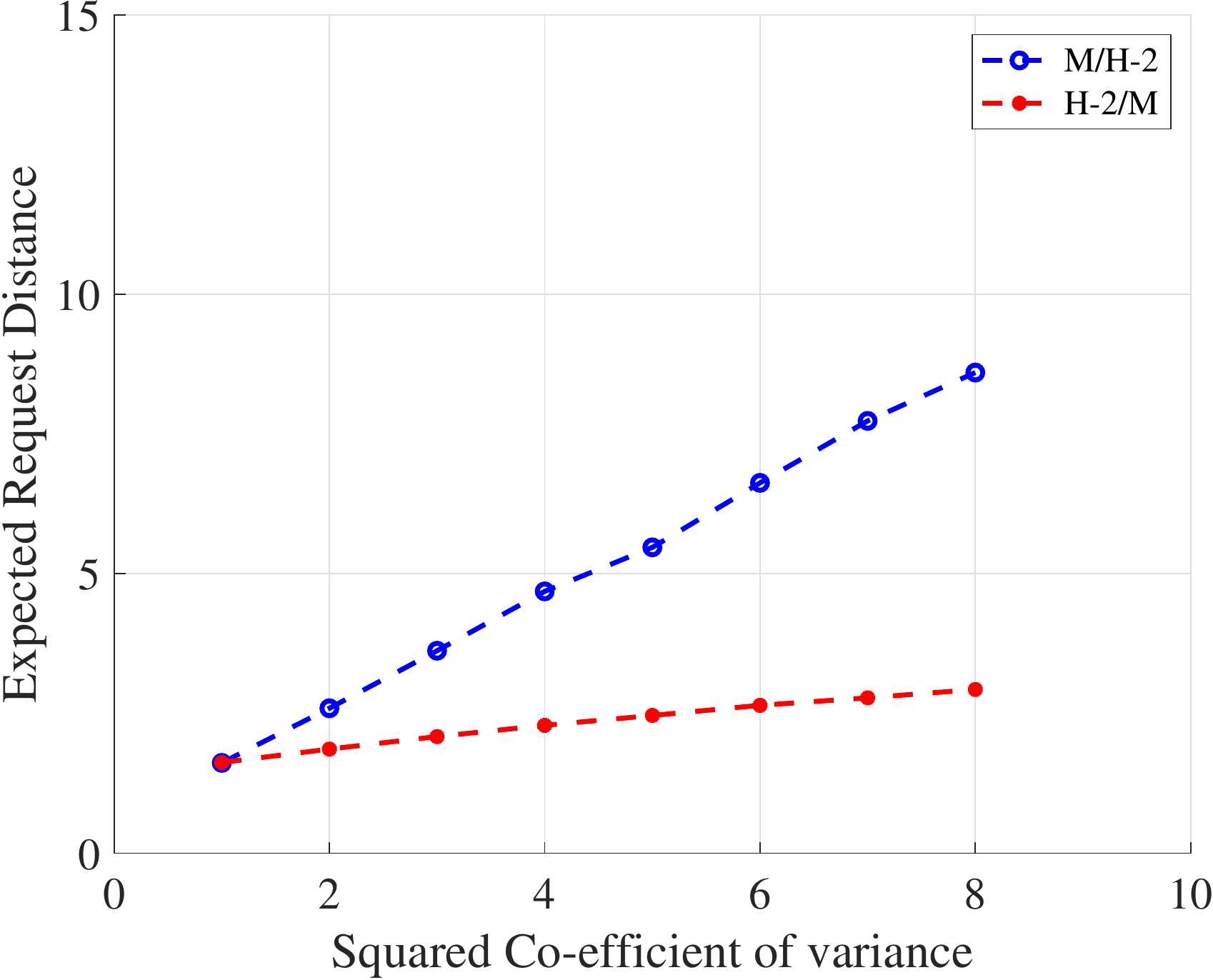}
\vspace{-0.1in}
\caption{Effect of squared coefficient of variation on expected request distance with $\pmb{\lambda = \mu =1}$ and $\pmb{c=2}$.}
\label{cv_erd}
\vspace{-0.1in}
\end{figure} 
We now examine how $c_v^2$ affects $\mathbb{E}[D]$ when $\rho$ is fixed. We compare  two systems: a general request  with Poisson distributed servers ($H_2$/M) and a  Poisson request with general distributed servers (M/$H_2$) where the general distribution is a $H_2$ distribution with the same set of parameters, i.e. we fix $\lambda = \mu =1$ with $c=2$. The results are shown in Figure \ref{cv_erd}. Note that, when $c_v^2=1$ $H_2$ is an exponential distribution and both $H_2$/M and  M/$H_2$ are identical M/M/1 systems. As discussed in the previous graph, performance of both systems decreases with increase in $c_v^2$ due to increase in the variability of user and server placements. However, from Figure \ref{cv_erd} it is clear that performance is more sensitive to  server placement as compared to the corresponding user placement.
\subsubsection{Expected request distance vs. server capacity}
\begin{figure}[htb]
\centering
\includegraphics[width=0.6\linewidth]{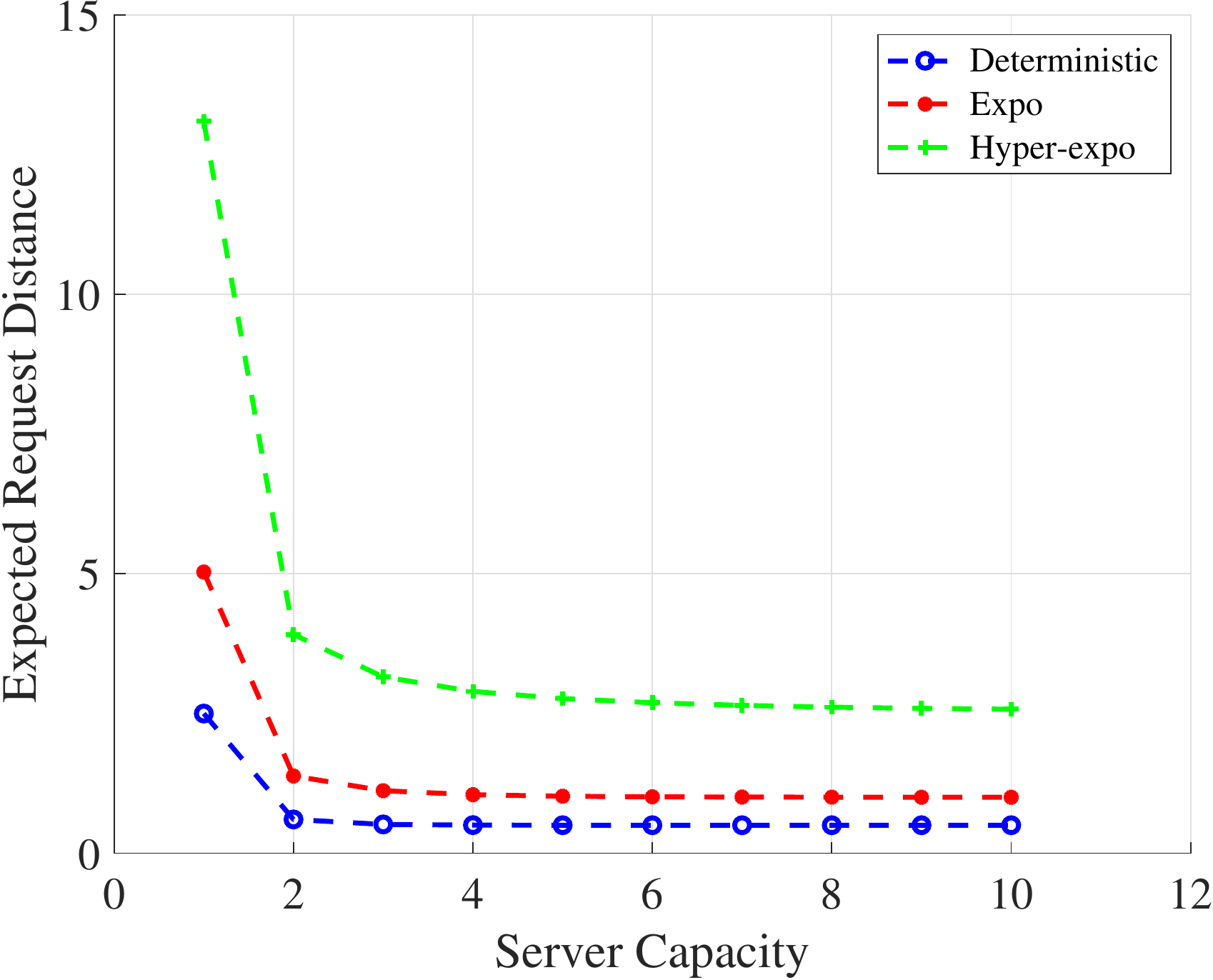}
\vspace{-0.1in}
\caption{Effect of server capacity on expected request distance with $\pmb{\rho = 0.8}$.}
\label{cap_erd}
\vspace{-0.1in}
\end{figure} 
We now focus on how server capacity affects $\mathbb{E}[D]$ as shown in Figure \ref{cap_erd}. We fix $\rho = 0.8$. With an increase in $c$, while keeping $\rho$ fixed, $\mathbb{E}[D]$ decreases. This is because queuing delay decreases. Note that $\mathbb{E}[D]$ gradually converge to a value with increase in server capacity. Theoretically, this can be explained by our discussion on uncapacitated allocation in Section \ref{sub:uncap2}. As $c\to\infty$ the contribution of queuing delay to $\mathbb{E}[D]$ vanishes and  $\mathbb{E}[D]$ becomes insensitive to $c.$
\subsubsection{Expected request distance vs. capacity moments}
\begin{figure}[htbp]
\centering
\begin{minipage}{0.25\textwidth}
\includegraphics[width=0.9\textwidth]{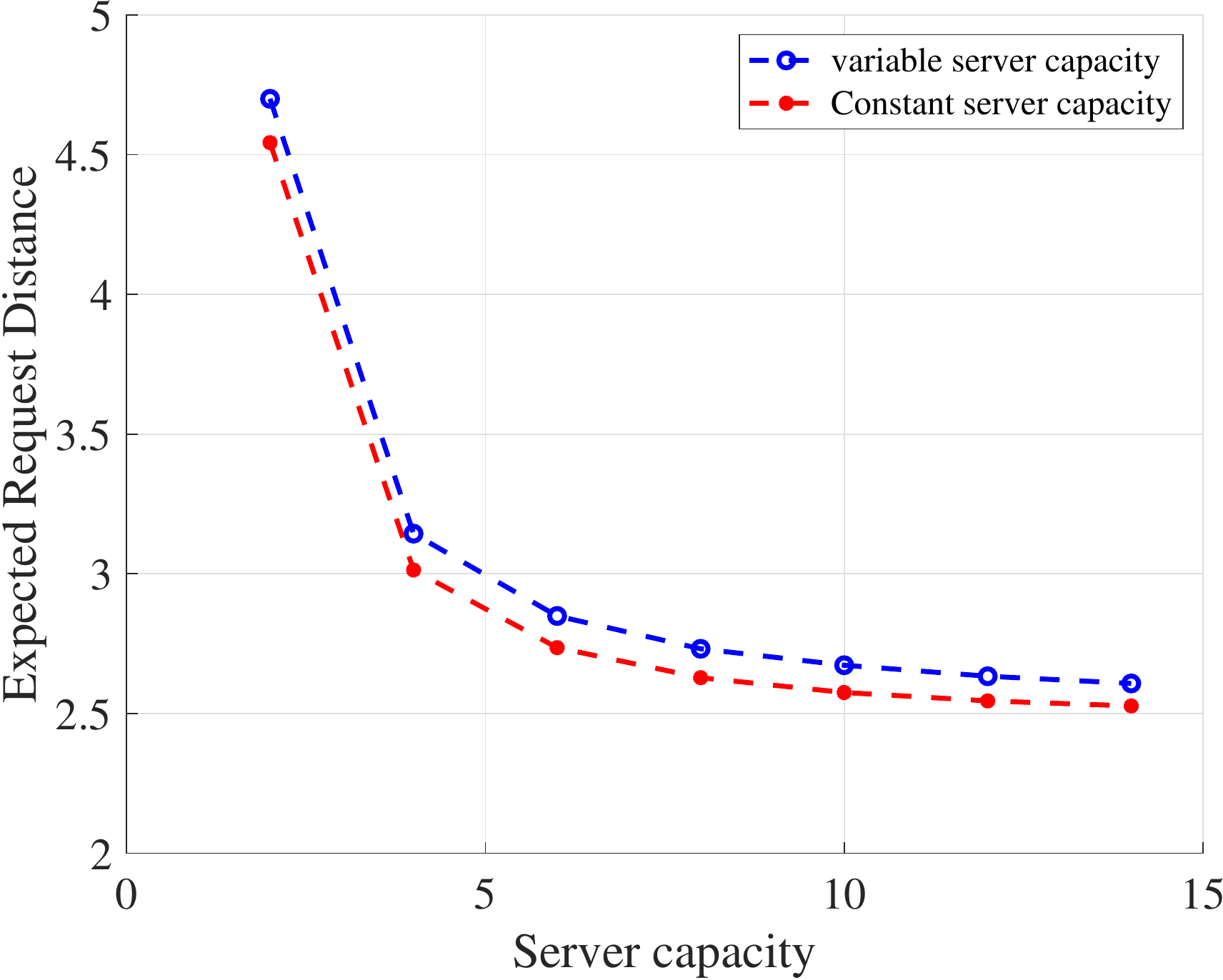}
\subcaption{}
\end{minipage}
\begin{minipage}{0.25\textwidth}
\includegraphics[width=0.9\textwidth]{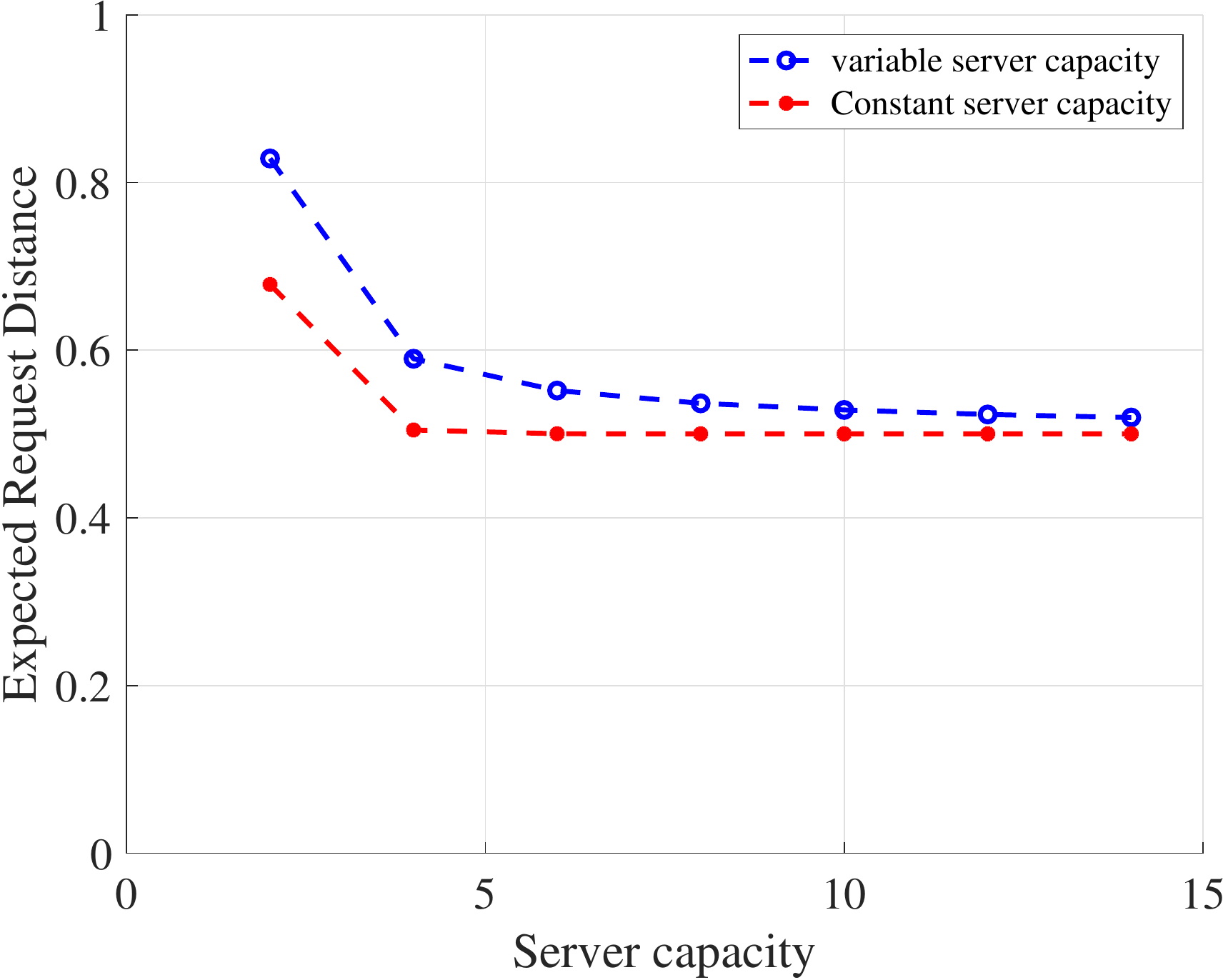}
\subcaption{}
\end{minipage}
\vspace{-0.15in}
\caption{Effect of variability in server capacity on expected request distance for $H_2$ (a) and Deterministic (b) distributions with $\pmb{\rho = 0.8}$.}
\label{fig:varc}
\vspace{-0.15in}
\end{figure}
We investigate the heterogeneous capacity scenario as discussed in Section \ref{sec:het}. Consider the plot shown in Figure \ref{fig:varc}. We fix $\rho = 0.8$. For the variable server capacity curve we choose a value for server capacity for each server uniformly at random from the set $\{1,2,\ldots,2c\}.$ For the constant server capacity curve we deterministically assign server capacity $c$ to each server. While both the curves exhibit similar performance under $H_2$ distribution, we observe better performance for constant server capacity curve at lower values of $c$ under Deterministic distribution. Variability in constant server case is zero, thus explaining its better performance.
\subsection{Comparison of different allocation policies}
\begin{figure}
\centering
\begin{minipage}{0.25\textwidth}
\includegraphics[width=0.9\textwidth]{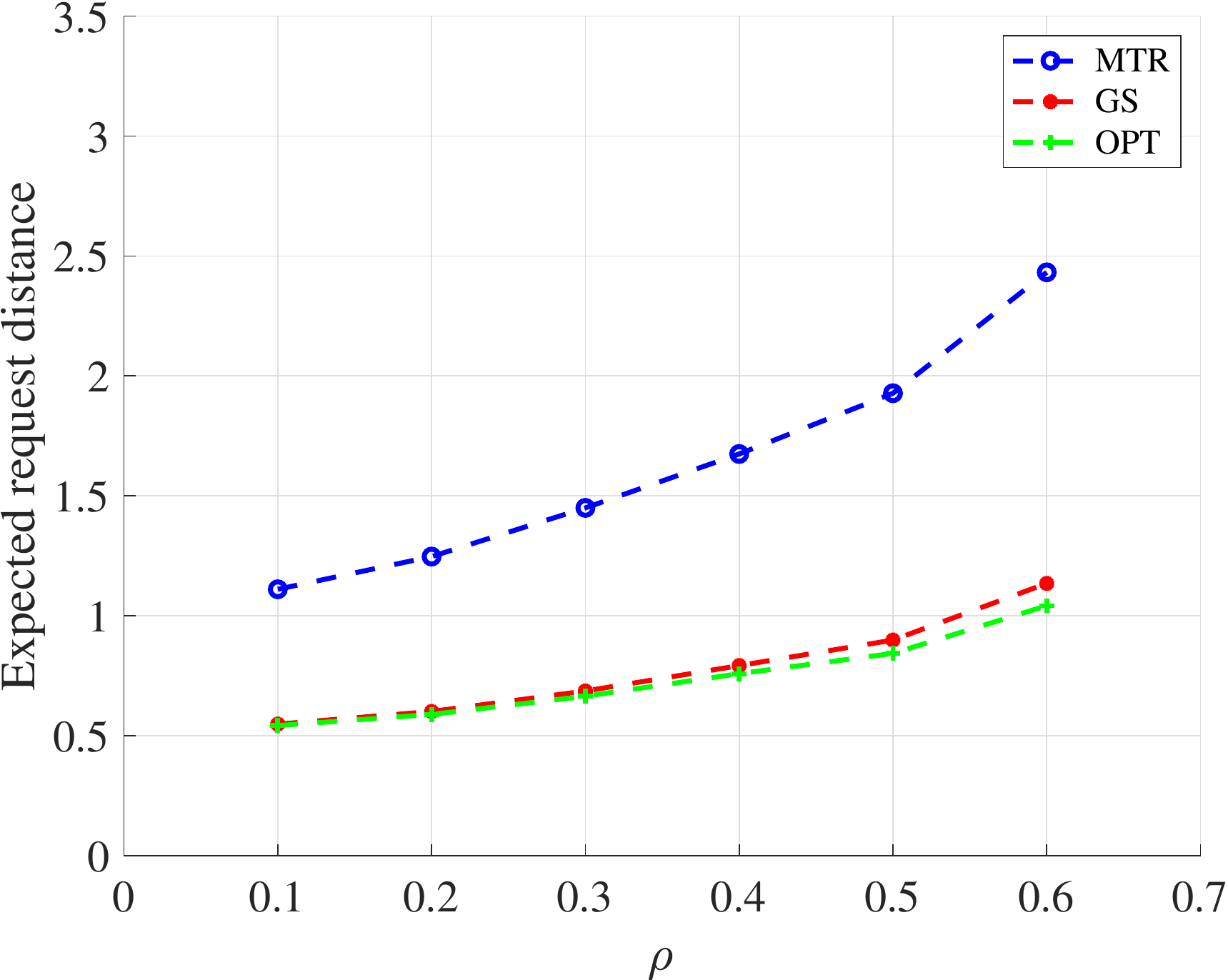}
\subcaption{}
\end{minipage}
\begin{minipage}{0.25\textwidth}
\includegraphics[width=0.9\textwidth]{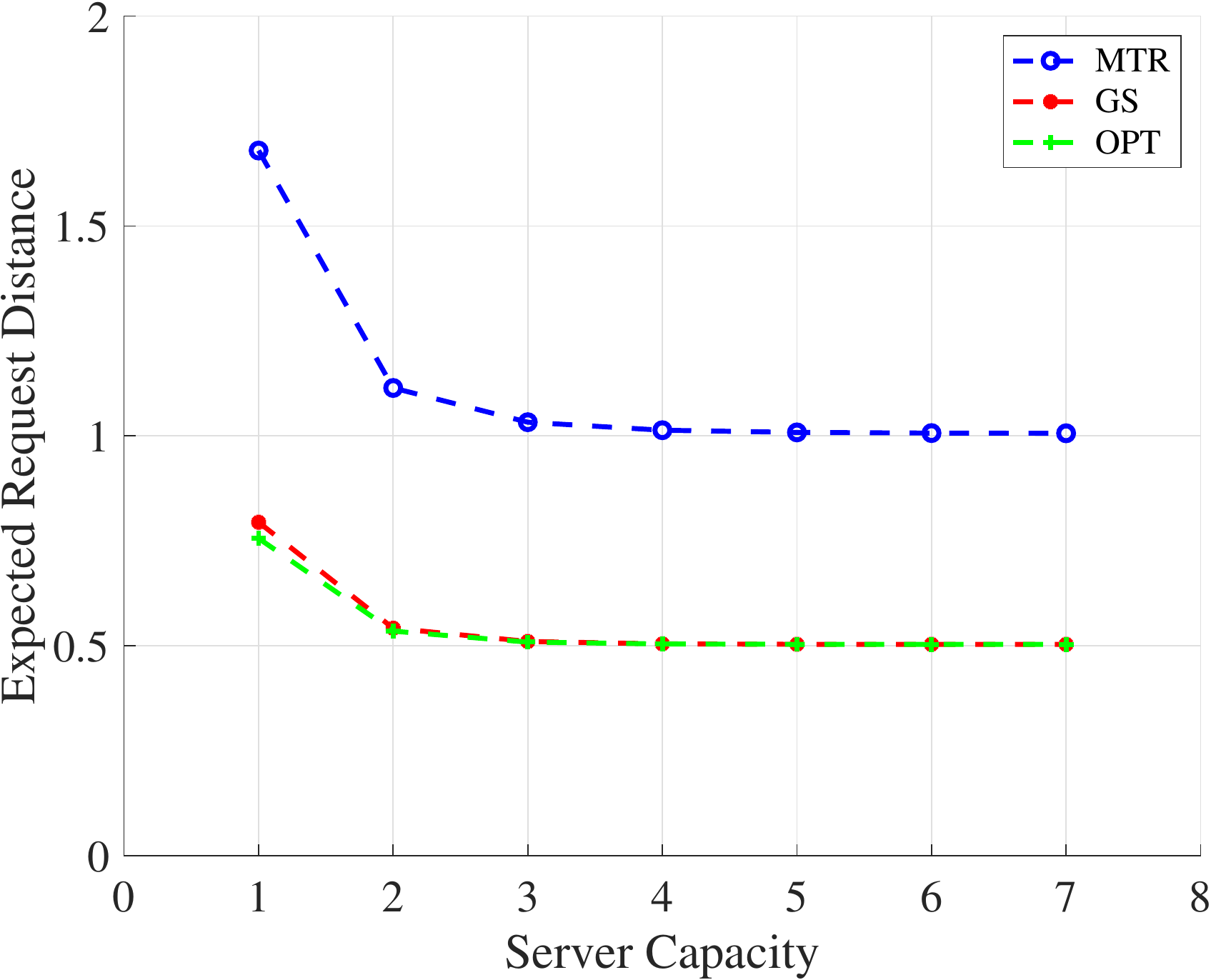}
\subcaption{}
\end{minipage}
\caption{Comparison of different allocation policies: (a) $\pmb{\rho}$ vs $\pmb{\mathbb{E}[D]}$ with $\pmb{c=1}$, (b) $\pmb{c}$ vs. $\pmb{\mathbb{E}[D]}$ with $\pmb{\rho = 0.4}$.} 
\label{fig:heuristics}
\vspace{-0.16in}
\end{figure}

We consider the case in which both users and servers are distributed according to Poisson processes. From Figure \ref{fig:heuristics} (a), we observe that due to its directional nature MTR has a larger expected request distance compared to other policies while GS provides near optimal performance. In Figure \ref{fig:heuristics} (b), we compare the performance of allocation policies across different server capacities. The expected request distance decreases with increase in server capacities across all policies. Both GS and the optimal policy converge to the same value as $c$ gets higher. 


We observe similar trends in the case of deterministic inter-server distance distributions. However, under equal densities, all the policies produce smaller  expected request distance as compared to their Poisson counterpart.  This advocates for placing equidistant servers in a bidirectional system with Poisson distributed requesters to minimize expected request distance.

\section{Conclusion}\label{sec:con}
We introduced a queuing theoretic model for analyzing the behavior of unidirectional policies to allocate tasks to servers on the real line. We showed the equivalence of UGS and MTR  w.r.t the expected request distance and presented results associated with the case when either requesters or servers were Poisson distributed. In this context, we analyzed a new queueing theoretic model: ESABQ, not previously studied in queueing literature. We also proposed a dynamic programming based algorithm to obtain an optimal allocation policy in a bi-directional system. We performed sensitivity analysis for unidirectional system and compared the  performance of various greedy allocation strategies along with the unidirectional policies to that of optimal policy. Going further, we aim to extend our analysis for unidirectional policies to a two-dimensional geographic region.


\section{Appendix}\label{appendix}
\subsection{Derivation of $F_Z$ for various inter-server distance distrbutions}\label{app-Ge-distribs}
\subsubsection{$\pmb{F_X(x)\sim \text{Exponential}(\mu)}$}
In this case, both $X$ and $Y$ are exponentially distributed. Thus the difference distribution is given by
\begin{align}
D_{XY}(x) = 1 - \frac{\lambda}{\lambda+\mu}e^{-\mu x}, \text{when} \;x\ge0\label{eq:diff_exp}
\end{align}

\noindent Combining \eqref{eq:first_busy} and \eqref{eq:diff_exp}, we get  
\begin{align}
F_Z(x) &= \frac{1 - \frac{\lambda}{\lambda+\mu}e^{-\mu x} - 1 + \frac{\lambda}{\lambda+\mu}}{\frac{\lambda}{\lambda+\mu}} = 1 - e^{-\mu x}.
\end{align}

\noindent Thus we obtain $F_X(x) = F_Z(x)\sim \text{Exponential}(\mu).$ 

\subsubsection{$\pmb{F_X(x)\sim \text{Uniform}(0,b)}$} \label{sub:uniform}
The c.d.f. for uniform distribution is 
\begin{equation}\label{eq:utility}
    F_X(x)=
\begin{cases}
     \frac{x}{b},& 0 \le x \le b;\\
     1,& x > b,
\end{cases}
\end{equation}
where $b$ is the uniform parameter. Thus we have 
\begin{align}
& D_{XY}(x) = \int_0^{\infty} F_X(x+y)\lambda e^{-\lambda y} dy\nonumber\\
&= \left[\int_0^{b-x} \frac{x+y}{b} \lambda e^{-\lambda y} dy\right] + \left[\int_{b-x}^{\infty} 1 \;\lambda e^{-\lambda y} dy\right]\nonumber\\
&=\frac{\lambda x - e^{-\lambda(b-x)}+e^{-\lambda b}}{b\lambda+e^{-\lambda b}-1}\label{eq:unf1}
\end{align}

\noindent Taking $k_\lambda = 1/(b\lambda+e^{-\lambda b}-1)$ and using Equation \eqref{eq:first_busy} we have 
\begin{align}
F_Z(x) &= k_\lambda\left[\lambda x + e^{-\lambda b}(1-e^{\lambda x})\right],\nonumber\\
f_Z(x) &= \lambda k_\lambda\left[1-e^{-\lambda b}e^{\lambda x})\right].
\end{align}

\noindent Taking $\alpha_Z = \int_{0}^{b}xf_Z(x)dx$ and $\sigma_Z^2 = [\int_{0}^{b}x^2f_Z(x)dx] - \alpha_Z^2$ we have 
\begin{align}
\alpha_Z &= \frac{b^2\lambda}{2}k_\lambda - \frac{1}{\lambda},\nonumber\\
\sigma^2_Z &= \frac{b^3\lambda}{3}k_\lambda - \frac{k_\lambda}{\lambda}\left[b(b\lambda-2)+\frac{2}{\lambda}(1-e^{-\lambda b})\right]- \alpha_Z^2,\nonumber\\
\alpha_X &= b/2, \quad \sigma^2_X = b^2/12.
\end{align}

\subsubsection{$\pmb{F_X(x)\sim \text{Deterministic}(d_0)}$} \label{sub:deterministic}
Another interesting scenario is when servers are equally spaced at a distance $d_0$ from each other i.e. when $F_X(x)\sim \text{Deterministic}(d_0).$ The c.d.f. for deterministic distribution is 
\begin{equation}\label{eq:utility}
    F_X(x)=
\begin{cases}
     0,& 0 \le x < d_0;\\
     1,& x \ge  d_0,
\end{cases}
\end{equation}
where $d_0$ is the deterministic  parameter. A similar analysis as that of uniform distribution yields

{\footnotesize
\begin{align}
F_Z(x) &= c_\lambda\left[e^{-\lambda (d_0-x)}-e^{\lambda d_0}\right]; \;f_Z(x) = \lambda c_\lambda\left[e^{-\lambda (d_0-x)}\right],
\end{align}
}
\noindent where $c_{\lambda} = 1/(1-e^{-\lambda d_0}).$ Thus we have 
\begin{align}
\alpha_Z &= c_\lambda\frac{d_0\lambda+e^{-\lambda d_0}-1}{\lambda},\nonumber\\
\sigma^2_Z &= \frac{c_\lambda}{\lambda}\left[d_0(d_0\lambda-2)+\frac{2}{\lambda}(1-e^{-\lambda d_0})\right]- \alpha_Z^2,\nonumber\\
\alpha_X &= d_0, \quad \sigma^2_X = 0.
\end{align}

\subsection{ESABQ under PRGS}
\subsubsection{Chapman-Kolmorogov equations}\label{sub:ck}
Let us write the Chapman-Kolmorogov equations for the Markov chain $\{(L(t), R(t), I(t)),\, t\geq 0\}$ defined in Section \ref{sub:ql_prgs}.

For $n\geq 2$ and $x>0$ we get
\begin{eqnarray*}
\frac{\partial}{\partial t}p_t(n,x;1) &=&\frac{\partial}{\partial x}p_t(n,x;1)-\lambda p_t(n,x;1)\\
&&- \frac{\partial}{\partial x} p_t(n,0;1)+\lambda p_t(n-1,x;1)\\
\frac{\partial}{\partial t} p_t(n,x;2) &=&\frac{\partial}{\partial x} p_t(n,x;2)-\lambda p_t(n,x;2) - \frac{\partial}{\partial x} p_t(n,0;2) \\
&&+\lambda p_t(n-1,x;2)+ F_X(x) \frac{\partial}{\partial x} p_t(n+c,0;1)\\&& + F_X(x) \frac{\partial}{\partial x} p_t(n+c,0;2).
\end{eqnarray*}
Letting $t\to\infty$ yields
\begin{align}
0&=&\frac{\partial}{\partial x} p(n,x;1)-\lambda p(n,x;1)- \frac{\partial}{\partial x} p(n,0;1)\nonumber\\
&&+\lambda p(n-1,x;1) \label{eq:1} \\
0&=&\frac{\partial}{\partial x} p(n,x;2)-\lambda p(n,x;2) - \frac{\partial}{\partial x}p(n,0;2)\nonumber\\
&&+ \lambda p(n-1,x;2) + F_X(x) \frac{\partial}{\partial x} p(n+c,0;1)\nonumber\\
&&+ F_X(x) \frac{\partial}{\partial x} p(n+c,0;2).\label{eq:2}
\end{align}
For $n=1$, $x>0$
\begin{eqnarray*}
\frac{\partial}{\partial t} p_t(1,x;1)&=&\frac{\partial}{\partial x} p_t(1,x;1)-\lambda p_t(1,x;1)\\&&-\frac{\partial}{\partial x}p_t(1,0;1)+\lambda p_t(0)F_Z(x)\\
\frac{\partial}{\partial t} p_t(1,x;2)&=&\frac{\partial}{\partial x} p_t(1,x;2)-\lambda p_t(1,x;2)\\
&&-\frac{\partial}{\partial x}p_t(1,0;2)+F_X(x) \frac{\partial}{\partial x} p(1+c,0;1)\\
&&+ F_X(x)\frac{\partial}{\partial x} p_t(1+c,0;2).
\end{eqnarray*}
Letting $t\to\infty$ yields
\begin{align}
0&=\frac{\partial}{\partial x} p(1,x;1)-\lambda p(1,x;1)-\frac{\partial}{\partial x}p(1,0;1)+\lambda p(0)F_Z(x) \label{eq:3}\\
0&=\frac{\partial}{\partial x} p(1,x;2)-\lambda p(1,x;2)-\frac{\partial}{\partial x}p(1,0;2)\nonumber\\
&+ F_X(x) \left(\frac{\partial}{\partial x} p(1+c,0;1)+ \frac{\partial}{\partial x} p(1+c,0;2)\right), x>0.
\label{eq:4}
\end{align}
We can collect the results in (\ref{eq:1})-(\ref{eq:4}) as follows: for $n\geq 1$, $x>0$,
\begin{align}
0&=\frac{\partial}{\partial x} p(n,x;1)-\lambda p(n,x;1)- \frac{\partial}{\partial x} p(n,0;1)\nonumber\\&+
\lambda p(n-1,x;1){\bf 1}(n\geq 2)+\lambda p(0)F_Z(x){\bf 1}(n=1) \label{eq:20}\\
0&=\frac{\partial}{\partial x} p(n,x;2)-\lambda p(n,x;2) - \frac{\partial}{\partial x}p(n,0;2)\nonumber\\
&+ \lambda p(n-1,x;2){\bf 1}(n\geq 2)\nonumber\\&+ F_X(x)\left( \frac{\partial}{\partial x} p(n+c,0;1)+ \frac{\partial}{\partial x} p(n+c,0;2)\right). \label{eq:20-bis}
\end{align}

Define $g(n,x)=p(n,x;1)+p(n,x;2)$ for $n\geq 1$, $x>0$. Summing (\ref{eq:20}) and (\ref{eq:20-bis}) gives
\begin{align}
0=&\frac{\partial}{\partial x} g(n,x)-\lambda g(n,x)- \frac{\partial}{\partial x} g(n,0)+
\lambda g(n-1,x){\bf 1}(n\geq 2) \nonumber\\
&+\lambda p(0)F_Z(x){\bf 1}(n=1)+ F_X(x) \frac{\partial}{\partial x} g(n+c,0), \nonumber\\& \forall n\geq 1, x>0.
\label{eq:211}
\end{align}
\subsubsection{Multiplicity of roots of $z^c - F^*_X(\lambda(1-z))$}\label{sub:rem1}
Assume that $F_X(x)=1-e^{-\mu x}$ (regular batch service times are exponentially distributed). Then,
\[
z^c - F^*_X(\lambda(1-z))=\frac{-\rho z^{c+1} +(1+\rho)z^c-1}{1+\rho(1-z)}.
\]
$z^c - F^*_X(\lambda(1-z))=0$ for $|z| \leq 1$ iff $Q(z):=-\rho z^{c+1} +(1+\rho)z^c-1=0$. 
The derivative of $Q(z)$ is $Q'(z)=z^{c-1}((1+\rho)c -\rho(c+1)z)$. It vanishes at $z=0$ and at $z=\frac{(1+\rho)c}{\rho(c+1)}>1$ under the stability condition $\rho<c$. Since $z=0$ is not a zero of $Q(z)$, we conclude that all zeros of $z^c - F^*_X(\lambda(1-z))$  in $\{|z|\leq 1\}$ have multiplicity one.

More generally, it is shown in \cite{bailey54} that all zeros of $z^c-F^*_X(\lambda(1-z))$ in $\{|z|\leq 1\}$ have multiplicity one if $F_X$ is a $\chi^2$-distribution with an even number $2p$ of degrees of freedom, i.e. $dF_X(x)= \frac{a^p}{\Gamma(p)}x^{p-1}e^{-ax} dx$ so that $1/\mu= p/a$.

\subsubsection{Roots of $A(z)$}\label{app-mg1-rouche}
Define $A(z)=F^*_X(\theta(z))$. If $A(z)$ has a radius of convergence larger than one (i.e. $A(z)$ is analytic for $|z|\leq \nu$ with $\nu>1$) and $A'(1)<c\in \{1,2,\ldots\}$  a direct application of Rouch\'e's theorem shows that $z^c - A(z)$ has $c$ zeros in the unit disk $\{|z|\leq 1\}$(see e.g. \cite{Adan05}). If the radius of convergence of $A(z)$ is one, $A(z)$ is differentiable at $z=1$, $A'(1)<c$,  and $z^c-A(z)$ has period $p$, then $z^c -A(z)$ has exactly $p\leq s$ zeros on the unit circle and $s-p$ zeros inside the unit disk $\{|z|<1\}$ \cite[Theorem 3.2]{Adan05}. Assume that the stability condition $\frac{d}{dz}A(z)|_{z=1}=\rho<c$ holds. $A(z)$ has a radius of convergence larger than one when $F_X$ is the exponential/Erlang/Gamma/ etc probability distributions.

\subsubsection{Special Cases}\label{app:verify}
One easily checks that (\ref{value:Nz}) gives the classical Pollaczek-Khinchin formula for the M/G/1 queue when $c=1$ and $F_Z=F_X$.

Let now $c=1$ in (\ref{value:Nz}) with $F_Z$ and $F_X$ arbitrary. Then,
\[
N(z)=\frac{a_1}{\lambda} \left(\frac{F^*_X(\lambda(1-z))-zF^*_Z(\lambda(1-z))}{F^*_X(\lambda(1-z))-z}\right)
\]
gives the $z$-transform of the stationary number of customers  in a M/G/1 queue with an exceptional first customer in a busy period.  The constant $a_1/\lambda$ is obtained from the identity $N(1)=1$ by application of L'Hopital's rule, which  gives\footnote{Note that we retrieve this result by letting $c=1$ in  (\ref{eq:c}).} $a_1/\lambda=(1-\rho)/(1-\rho+\rho_Z)$. This gives
\[
N(z)=\frac{1-\rho}{1-\rho+\rho_Z} \left(\frac{F^*_X(\lambda(1-z))-zF^*_Z(\lambda(1-z))}{F^*_X(\lambda(1-z))-z}\right).
\]
The above is a known result  \cite{Welch64}. 

If $F^*_Z=F^*_X:=F^*$, then 
\[
N(z)
=\frac{ \sum_{k=1}^c a_k\left[(z^c-z^k)z^c+((1-z^c)z-(1-z^k)) F^*(\theta(z))\right]}{\theta(z) (z^c-F^*(\theta(z))}.
\]


\subsection{Results for Section \ref{sec:het}}\label{app-het}
\subsubsection{Derivation of $v_1(z)$ and $v_2(z)$}\label{app-het-rouche}
$v_1(z)$ in \eqref{eq:lhs_var_c} can further be simplified to
{\footnotesize
\begin{align}
v_1(z) &= \sum\limits_{l=0}^\infty z^l \sum\limits_{m=0}^{l}\pi_m\sum\limits_{i=0}^{c}k_{i+l-m}p_i \nonumber\\
&= \sum\limits_{m=0}^\infty \pi_m\sum\limits_{l\ge m} z^l\sum\limits_{i=0}^{c}k_{i+l-m}p_i\nonumber\\
&= \sum\limits_{m=0}^\infty \pi_mz^m\sum\limits_{l\ge m} z^{l-m}\sum\limits_{i=0}^{c}k_{i+l-m}p_i\nonumber\\
&= \sum\limits_{m=0}^\infty \pi_mz^m\sum\limits_{j=0}^{\infty} z^{j}\sum\limits_{i=0}^{c}k_{i+j}p_i\nonumber\\
&= N(z)\sum\limits_{i=0}^{c}p_iz^{-i}\sum\limits_{j=0}^{\infty} z^{i+j}k_{i+j}\nonumber\\
&= N(z)\sum\limits_{i=0}^{c}p_iz^{-i}\bigg[K(z)-\sum\limits_{j=0}^{i}k_jz^j+k_iz^i\bigg]\nonumber\\
&= N(z)\bigg\{\sum\limits_{i=0}^{c}p_iz^{-i}\bigg[K(z)-\sum\limits_{j=0}^{i}k_jz^j\bigg]+\sum\limits_{i=0}^{c}k_iz^i\bigg\}.
\end{align} 
}

Similarly $v_2(z)$ in \eqref{eq:rhs_var_c} can further be simplified to

{\footnotesize
\begin{align}
&v_2(z) = \sum\limits_{l=0}^\infty z^l \sum\limits_{m=l+1}^{c+l}\pi_m\sum\limits_{i=m-l}^{c}k_{i+l-m}p_i\nonumber\\
&= \bigg[\sum\limits_{l=0}^\infty z^l \sum\limits_{m=l}^{c+l}\pi_m\sum\limits_{i=m-l}^{c}k_{i+l-m}p_i\bigg] - N(z)\sum\limits_{i=0}^{c}k_iz^i\nonumber\\
&= \bigg[\sum\limits_{m=0}^{c}z^{-m}\sum\limits_{i=m}^{c}k_{i-m}p_i\sum\limits_{l=0}^\infty z^{m+l} \pi_{m+l}\bigg] - N(z)\sum\limits_{i=0}^{c}k_iz^i\nonumber\\
&= \bigg[\sum\limits_{m=0}^{c}z^{-m}\sum\limits_{i=m}^{c}k_{i-m}p_i\bigg\{N(z)-\sum\limits_{j=0}^{m-1}\pi_jz^j\bigg\}\bigg] - N(z)\sum\limits_{i=0}^{c}k_iz^i.
\end{align} 
}

\subsubsection{Roots of $A(z)$}\label{app-het-rouche}
Denote $A(z) = K(z)\sum_{i=0}^{c}p_{c-i}z^{i}.$ Clearly, $A(z)$ is also a probability generating function ({\it pgf}) for the non-negative random variable $V+\tilde{\mathcal C}$ where $\tilde{\mathcal C}$ is a random variable on $\{0,\ldots,c-1\}$ with distribution $\texttt{Pr}(\tilde{\mathcal C}=j) = p_{c-j}, \forall j\in\{0,1,\ldots,c-1\}.$ Also we have
\begin{align}
A^{\prime}(1) &= K^{\prime}(1)+\sum_{i=0}^{c}p_{c-i}i = \rho+\sum_{i=1}^{c}p_{i}(c-i) \nonumber\\
&= \rho+\sum_{i=1}^{c}p_{i}c-\sum_{i=1}^{c}ip_i = \rho+c-\overline{\mathcal C}\nonumber
\end{align}
From our stability condition we know that $\rho < \overline{\mathcal C}.$ Thus $A^{\prime}(1) < c.$ Since $A(z)$ is a pgf and $A^{\prime}(1) < c$, by applying the arguments from \cite[Theorem 3.2]{Adan05} we conclude that the denominator of equation \eqref{eq:het_l} has $c-1$ zeros inside and one on the unit circle, $|z|= 1.$

\subsection{Proof of Lemma \ref{lem:nocross}}\label{app-lemma}
\begin{figure}[t!]
\centering
\includegraphics[width=0.8\columnwidth]{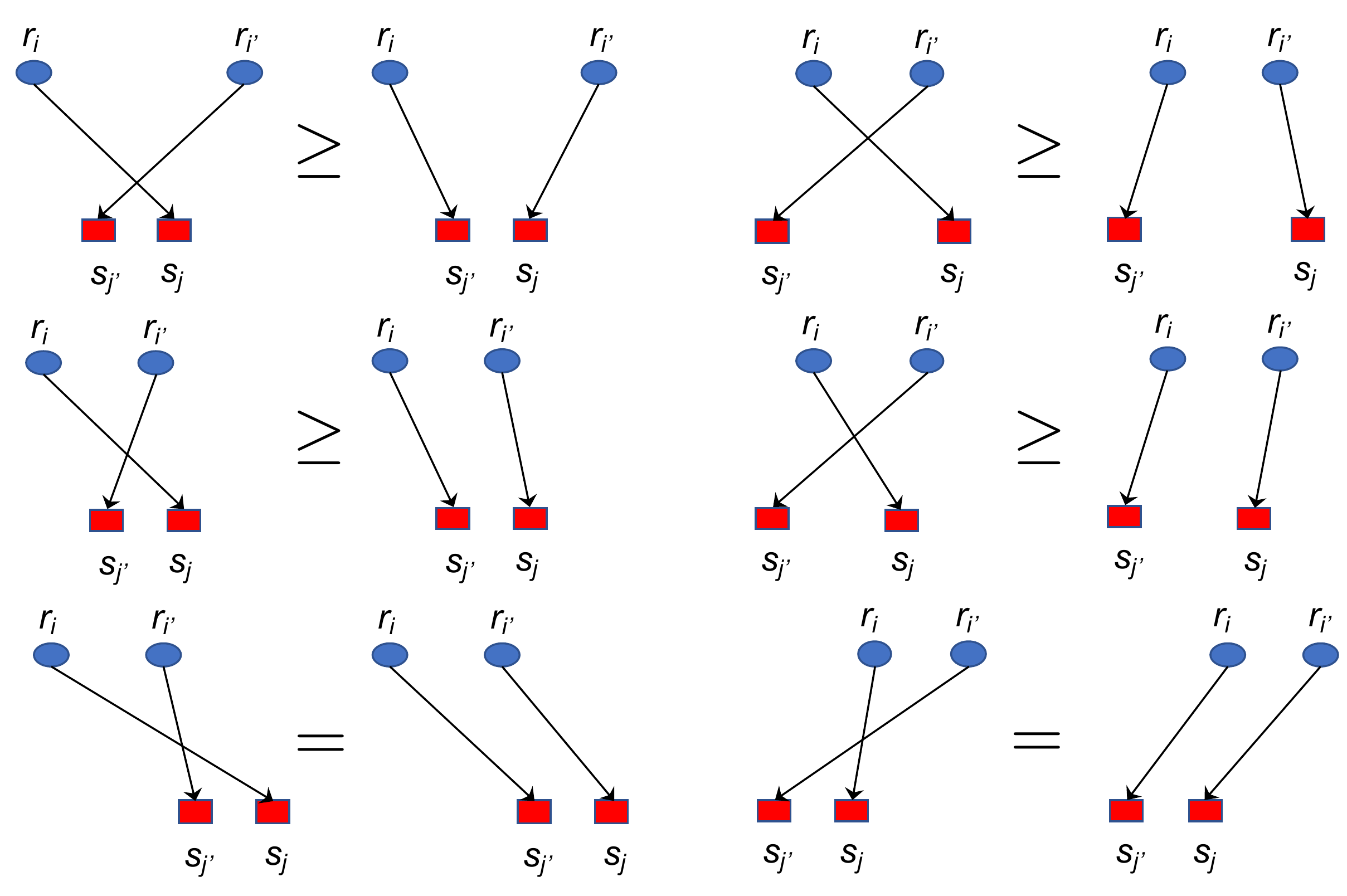}
\caption{\label{fig:nocross}Uncrossing an assignment either reduces request distance or keeps it unchanged.}
\end{figure}

\begin{proof}
It can be observed that if such a 4-tuple $(i,j,i',j')$ exists, the cost can be reduced by assigning $i$ to $j'$ and $i'$ to $j$, hence we arrive at a contradiction. To show this, consider the six possible cases of relative ordering between $r_i, r_{i'}, s_j, s_{j'}$ which obey $r_i < r_{i'}$ and $s_j > s_{j'}$. We give a pictorial proof in Figure \ref{fig:nocross}\footnote{For ease of exposition, the requesters and servers are shown to be located along two separate horizontal lines, although they are located on the same real-line.}. It is easy to see that in each of the cases, the request distance of the \emph{uncrossed} assignment is either smaller or remains unchanged.
\end{proof}

\end{document}